\newtheorem{thm}{Theorem}
\newtheorem{defn}{Definition}
\newtheorem{cor}{Corollary}
\newtheorem{rem}{Remark}
\newtheorem{fact}{Fact}
\newtheorem{lema}{Lemma}
\newtheorem{prop}{Proposition}
\DeclareMathOperator{\diag}{diag}
\DeclareMathOperator{\vol}{vol}
\DeclareMathOperator{\Ad}{Ad}
\DeclareMathOperator{\Ker}{Ker}
\DeclareMathOperator{\Span}{Span}
\DeclareMathOperator{\GHZ}{GHZ}
\DeclareMathOperator{\SEP}{SEP}
\begin{document}
\global\long\global\long\global\long\def\bra#1{\mbox{\ensuremath{\langle#1|}}}
\global\long\global\long\global\long\def\ket#1{\mbox{\ensuremath{|#1\rangle}}}
\global\long\global\long\global\long\def\bk#1#2{\mbox{\ensuremath{\ensuremath{\langle#1|#2\rangle}}}}
\global\long\global\long\global\long\def\kb#1#2{\mbox{\ensuremath{\ensuremath{\ensuremath{|#1\rangle\!\langle#2|}}}}}

\title
[Pure states of three qubits and their single-particle reduced density matrices]
{When is a pure state of three qubits determined by its single-particle reduced density matrices?}

\author{A Sawicki$^{1,2}$, M Walter$^{3}$, and M Ku\'{s}$^{2}$ }

\address{$^1$School of Mathematics, University of Bristol,
University Walk, Bristol BS8 1TW, UK}

\address{$^2$Center for Theoretical Physics, Polish Academy of Sciences, Al.
Lotnik\'ow 32/46, 02-668 Warszawa, Poland}

\address{{$^{3}$Institute for Theoretical Physics, ETH Z\"urich,
Wolfgang--Pauli--Str. 27, 8093 Z\"urich, Switzerland}}

\eads{\mailto{Adam.Sawicki@bristol.ac.uk\\mwalter@itp.phys.ethz.ch\\marek.kus@cft.edu.pl}}

\begin{abstract}
{Using techniques from symplectic geometry, we prove that a pure state of
three qubits is up to local unitaries uniquely determined by its one-particle
reduced density matrices exactly when their ordered spectra belong to the
boundary of the, so called, Kirwan polytope. Otherwise, the states with given
reduced density matrices are parameterized, up to local unitary equivalence,
by two real variables. Given inevitable experimental imprecisions, this means
that already for three qubits a pure quantum state can never be reconstructed
from single-particle tomography. We moreover show that knowledge of the
reduced density matrices is always sufficient if one is given the additional
promise that the quantum state is not convertible to the
Greenberger--Horne--Zeilinger (GHZ) state by stochastic local operations and
classical communication (SLOCC), and discuss generalizations of our results
to an arbitary number of qubits.}
\end{abstract}
\submitto{\JPA}
\maketitle

Finding solutions to Hamilton's equations for a given system is a standard
problem in classical mechanics. The system in question is often invariant
with respect to a certain group of symmetries, and when the symmetries are
continuous, such as in the case of rotational symmetry, they form a Lie group
$K$. The existence of symmetries is inevitably connected to the first
integrals of Hamilton's equations. The celebrated theorem of Arnold
\cite{Arnold89} states that in case when $K=T^{n}$ is an $n$-dimensional
torus and $n$ half the dimension of the
phase space $M$%
\footnote{A phase space is a symplectic manifold $(M,\omega)$, where $\omega$
is a closed nondegenerate two-form.} then the system is completely
integrable. When the group $K$ is not abelian, it is still possible to find
corresponding first integrals, although they typically do not Poisson
commute. For example, when a particle is moving in a potential with
rotational symmetry, so that $K=SO(3)$, then the conserved quantities are the
three components of the angular momentum, corresponding to the invariance of
the system with respect to infinitesimal rotations about the axes $x$, $y$,
and $z$. These infinitesimal rotations generate the Lie algebra
$\mathfrak{so}(3)$ of $SO(3)$. There are many possible bases of
$\mathfrak{so}(3)$, each corresponding to a different choice of rotation axes
and each giving three first integrals. The mathematical object which encodes
information about the first integrals for all possible choices of generators
of $\mathfrak{so}(3)$ is an equivariant map $\mu:
M\rightarrow\mathfrak{so}(3)^{\ast}$ from the phase space $M$ to the space of
linear functionals on the Lie algebra $\mathfrak{so}(3)$. For every
infinitesimal symmetry $\xi\in\mathfrak{so}(3)$ one obtains a corresponding
first integral by the formula $\mu_{\xi}(x)=\langle\mu(x),\,\xi\rangle$,
where by $\langle\,\cdot,\cdot\rangle$ we denote the pairing between linear
functionals from $\mathfrak{k}^\ast$ and vectors in $\mathfrak{k}$.  This
idea can be generalized to arbitrary Lie groups $K$ and a corresponding map
$\mu:M\rightarrow\mathfrak{k}^{\ast}$ is called a \emph{momentum map}
\cite{GS90}.

Remarkably, momentum maps appear naturally not only in classical but also in
quantum mechanics. Indeed, the Hilbert space $\mathcal{H}$ on which a given
quantum-mechanical system is modeled can be seen as a phase space if we
identify vectors that differ by a global rescaling or a phase factor
$e^{i\phi}$. The set of (pure) quantum states is thus isomorphic to the
complex projective space $\mathbb{P}(\mathcal{H})$, which is well-known to be
a symplectic manifold. When the considered system consists of $N$ subsystems%
\footnote{We assume for simplicity that their Hilbert spaces are of the same
dimension.} then the space $\mathcal{H}$ has the additional structure of a
tensor product, namely $\mathcal{H}=\mathcal{H}_{1}^{\otimes N}$, where
$\mathcal{H}_{1}$ is the single-particle Hilbert space. The mathematical
properties of this structure manifest physically as entanglement, i.e.\ 
quantum correlations between subsystems. These correlations are invariant
with respect to the local unitary action, i.e.\ to the action of the Lie
group $K=SU(\mathcal{H}_{1})^{\times N}$ on $\mathcal{H}$ by the tensor
product. Since $K$ preserves the phase space structure of
$\mathbb{P}(\mathcal{H})$ one gets a momentum map $\mu:
\mathbb{P}(\mathcal{H})\rightarrow\mathfrak{k}^{\ast}$, and the image
$\mu([v])$ of a quantum state $[v]\in\mathbb{P}(\mathcal{H})$ is, up to some
unimportant shifting, the collection of its one-body reduced density matrices
(see Section~\ref{sec:fibers}).
The matrices $\rho_{i}$ can be diagonalized by the action of $K$ and their
eigenvalues can be ordered, for example decreasingly. The map which assigns
to the state $[v]$ the collection of the ordered spectra of its one-body
reduced density matrices will be denoted by $\Psi :
\,\mathbb{P}(\mathcal{H})\rightarrow\mathfrak{t}_+^*$. This definition can be
generalized to an arbitrary compact Lie group $K$, with $\mathfrak{t}_{+}^*$
a positive Weyl chamber in $\mathfrak t$
(see Section~\ref{sec:momentum} for mathematical details and
Section~\ref{sec:fibers} for the case of three qubits).

Crucial for the rest of the paper is a specific geometric feature of the image of a momentum map, namely its convexity.
In the early '80s, the convexity properties of $\mu(M)$ were
investigated by Atiyah \cite{Atiyah82} and Guillemin and Sternberg
\cite{GS82,GS84} in the case of the abelian group $K=T^{n}$. They proved that
$\mu(M)$ is a convex polytope; it is in fact the convex hull of the image of
the set of fixed points. For nonabelian $K$, this is typically no longer
true. However, as it was shown in \cite{GS82} for K\"ahler manifolds and in
\cite{Kirwan82,Kirwan84} for general symplectic manifolds, the set $\Psi(M)
= \mu(M) \cap \mathfrak t^*_+$ is always a convex polytope---the so called
\emph{Kirwan polytope}, or \emph{momentum polytope}. This fact translates
to the observation that the set of possible local eigenvalues, i.e.\ the
eigenvalues of the one-partial reduced density matrices of an arbitrary pure quantum state
is a convex polytope. Finding an explicit description of this polytope is in general a
difficult problem; it is an instance of the quantum marginal problem, or $N$-representability problem in
the case of fermions \cite{ruskai69,colemanyukalov00}. Following the
groundbreaking work of Klyachko \cite{klyachko98} on the famous Horn's
problem, which can be formulated as the problem of determining a Kirwan
polytope, Berenstein and Sjamaar \cite{Berenstein00} found a general solution
for the case where $M$ is a coadjoint orbit of a larger group. This was in
turn used to solve the one-body quantum marginal problem; namely, sets of
inequalities describing the Kirwan polytope $\Psi(\mathbb P(\mathcal H))$
were given \cite{klyachko04,daftuarhayden04,klyachko06}. Intriguingly, the
Kirwan polytope can under certain assumptions be also described in terms of
coinvariants or, equivalently, in terms of the representations that occur in
the ring of polynomial functions on $M$ \cite{Ness84,Brion87}. In the context
of the quantum marginal problem, this has been discovered by Christandl
et.~al.~\cite{christandlmitchison06,christandlharrowmitchison07} and Klyachko
\cite{klyachko04}. The interplay of these two complementary perspectives can
also be seen in \cite{ChristandlDoranKousidisWalter}, where an algorithm has
been given for the more general problem of computing the distribution of
eigenvalues of the reduced density matrices of a multipartite pure state in
$\mathbb P(\mathcal H)$ drawn at random according to the Haar measure; in
particular, this gives an alternative solution to the one-body quantum
marginal problem.

The above line of work can also be seen as one of the first successful
applications of momentum map geometry to the theory of entanglement (see also
\cite{klyachko08}). In \cite{Sawicki11} the importance of the momentum map to
entanglement was investigated from a different perspective. The authors
showed that restricting the map $\Psi$ to different local unitary orbits in
$\mathbb{P}(\mathcal{H})$ gives rise to a well defined purely geometric
measure of entanglement. They also pointed out that the properties of the
fibers of $\Psi$ are crucial to the solution of the local unitary equivalence
problem and gave an algorithm for checking it for three qubits
\cite{sawicki11a}. Geometrically, the problem of local unitary equivalence of
two states reduces to determining whether they both belong to the same orbit
of the local unitary group $K=SU(\mathcal{H}_1)^{\times N}$. From the
physical point of view, this corresponds to checking if one of the states can
be obtained from the other one by unitary quantum operations on the
subsystems---an important problem in quantum engineering of states aiming at
practical applications \cite{Kraus10,Kraus10a}.

In a subsequent paper, the authors analyzed the geometric structure of the
fibers of $\Psi$ for two distinguishable particles, two fermions and two
bosons \cite{HKS12}. In all these cases, the $K^\mathbb{C}$-action is spherical, i.e.\
the Borel subgroup $B\subset K^{\mathbb{C}}$ has an open orbit in
$\mathbb{P}(\mathcal{H})$. By Brion's theorem \cite{Brion87}, this implies
that every fiber of $\Psi$ is contained in a single $K$-orbit. That is, in
these cases every quantum state is up to local unitaries uniquely determined
by the spectra of its reduced density matrices. Moreover, each fiber of
$\Psi$ has the structure of a symmetric space \cite{HKS12}.

In situations involving larger numbers of particles, e.g., $N$-qubit systems
with $N>2$, the action of $K^\mathbb{C}$ is not spherical and Brion's theorem cannot be
applied. The identification of $K$-orbits, i.e.\ classes of states which can
be mutually transformed into each other by local unitary transformations,
generically requires more information than it is contained in the spectra of
the single-particle reduced density matrices.

The present paper explores such situations where a simple application of Brion's theorem is no longer possible.
Its main goal is to analyze the set of quantum states which are mapped by $\Psi$ to the same
point of the Kirwan polytope $\Psi(M)$ (again, these are quantum states whose
reduced density matrices have the same spectra) by studying the fibers of
$\Psi$ or, equivalently, the symplectic quotients $M_\alpha =
\Psi^{-1}(\alpha)/K$, in the case where the action of $K^\mathbb{C}$ is no longer
spherical. Specifically, we consider the above-mentioned case of $N$ qubits,
where the Kirwan polytope is known explicitly \cite{Higuchi03,bravyi04}. A
detailed analysis is carried out for three qubits and we make some remarks
about the general case. Our main tools are the convexity theorem for
projective subvarieties \cite{Ness84,Brion87} as well as general properties of orbit spaces.
Specifically, we show that $M_\alpha$ is generically a two-dimensional
stratified symplectic space \cite{lermansjamaar91}. For the points in the
boundary of the Kirwan polytope the situation is very different. We prove
that in this case $M_\alpha$ is a single point, i.e.\ the dimension drops down
by two compared with the interior. In particular, these results characterize
the $K$-orbits which are uniquely determined by the spectra of the reduced
density matrices. They may be contrasted with the well-known fact that the
two-particle reduced density matrices generically suffice to determine a
pure-state of three qubits \cite{Linden02}.

The complexified group $G=K^\mathbb{C}$ plays its own role in the
classification of states of multiparticle quantum systems. Its elements
correspond to stochastic local operations with classical communication
(SLOCC); see e.g.\ \cite{horodecki09}. States can be classified by
identifying those which belong to the same SLOCC class, i.e.\ the same
$G$-orbit. As in the case of locally unitary equivalent states (orbits of
$K$), a detailed analysis of the corresponding Kirwan polytopes is useful in
such a classification. We therefore examine the Kirwan polytopes
$\Psi(\overline{G.[v]})$ for the three-qubit SLOCC classes \cite{Dur00},
i.e.\ for the orbit closures of the complexified group $G=K^{\mathbb{C}}$,
and describe their mutual relations. In particular, we find that the map
$\Psi$ separates $K$-orbits when restricted to the closure of the so-called
$W$-class. That is, states from the $W$-class are up to local unitaries
characterized by the collection of spectra of their one-qubit reduced density
matrices. Our argument generalizes directly to $W$-states of an arbitrary
number of qubits.

The paper is organized as follows. In Section~\ref{sec:momentum} we recall the
notion of a momentum map and state precisely various versions of the
convexity theorems. In Section~\ref{sec:fibers} we discuss in detail the
structure of the fibers of $\Psi$ for three qubits. Throughout the paper we
prove only new results and otherwise give references to the literature.

\section{Momentum maps} \label{sec:momentum}
Let $K$ be a compact connected Lie group acting on a symplectic manifold
$(M,\omega)$ by symplectomorphisms, i.e., the action $\Phi_{g} \colon M \rightarrow M$ of
any group element $g \in K$ preserves the symplectic form, 
$\Phi_{g}^{\ast}\omega=\omega$. Denote by $\mathcal{F}(M)$ the
space of smooth functions on $M$.

\begin{defn}
\label{moment_map_def}
We say that a symplectic action of $K$ on $(M,\omega)$ is Hamiltonian if and
only if there exists a momentum map $\mu:M\rightarrow\mathfrak{k}^{\ast}$, i.e.\ 
a map which satisfies the following three conditions:
\begin{enumerate}
\item For any $\xi\in\mathfrak{k}$, the fundamental vector field
    $\hat{\xi}(x)=\frac{d}{dt}\big|_{t=0}\Phi(e^{t\xi},\, x)$ is the
    Hamiltonian vector field for the Hamilton function
    $\mu_{\xi}(x)=\langle\mu(x),\,\xi\rangle$; i.e.
    $d\mu_{\xi}=\omega(\hat{\xi},\cdot)$.
\item The induced map $\mathfrak k \ni \xi \mapsto \mu_\xi \in \mathcal
    F(M)$ is a homomorphism of Lie algebras, i.e.
\begin{eqnarray*}
\mu_{[\xi_{1},\xi_{2}]}(x)=\{\mu_{\xi_{1}},\mu_{\xi_{2}}\}(x)\,.
\end{eqnarray*}

\item The map $\mu$ is equivariant, i.e.
    $\mu(\Phi_{g}(x))=\Ad_{g}^{\ast}\mu(x)$, where $\Ad_{g}^{\ast}$ is
    the coadjoint action of $K$ on $\mathfrak k^*$ defined by
    $\langle\Ad_{g}^{\ast}\alpha,\,\xi\rangle=\langle\alpha,\Ad_{g^{-1}}\xi\rangle$
    in terms of the adjoint action $\Ad_g\xi=\frac{d}{dt}|_{t=0}g\,e^{t\xi}\,g^{-1}$ of $K$ on its Lie algebra $\mathfrak{k}$.
\end{enumerate}
\end{defn}

\noindent
For semisimple $K$, hence in particular for $K=SU(\mathcal{H}_{1})^{\times N}$,
the momentum map $\mu$ is uniquely defined by the above properties \cite{GS90}.

\subsection{Convexity properties of the momentum map}
\label{subs:convexity} We will now assume that $M$ is compact and connected.
Let us choose a maximal torus $T\subset K$, with Lie algebra $\mathfrak{t}$,
and a positive Weyl chamber $\mathfrak t^*_+ \subset \mathfrak t^*$. Denote
by $\Psi:M\rightarrow\mu(M)\cap\mathfrak{t}^*_{+}$ the map which assigns to
$x\in M$ the unique point of intersection $\mu(K.x)\cap\mathfrak{t}_{+}^*$.
Then the following convexity results hold:
\begin{enumerate}
\item The image $\Psi(M)$ is a convex polytope, the so-called
    \emph{Kirwan polytope} (\cite{GS82} and \cite{Kirwan82,Kirwan84}).
\item The fibers of $\mu$ (and hence the fibers of $\Psi$) are connected
    \cite{Kirwan82,Kirwan84}.
\item The map $\Psi$ is an open map onto its image, i.e.\ for any open
    subset $U\subset M$ the image $\Psi(U)$ is open in $\Psi(M) = \mu(M)
    \cap \mathfrak t^*_+$ \cite{Knop02}.
\item The image $\Psi(M_{\max})$ of $M_{\max}=\{x\in M:\,\dim K.x\,\,\text{is maximal}\}$ is convex \cite{HH96}.
\end{enumerate}

\noindent Since $M_{\max}$ is a connected, open and dense subset of $M$ \cite{Montgomery59}, (iv)
implies the following:
\begin{fact}\label{interior}
The set $\mu(M_{\max})\cap\mathfrak{t}_{+}^*$ is an open, dense, connected,
convex subset of $\Psi(M) = \mu(M)\cap\mathfrak{t}_{+}^*$. In particular, it
contains the (relative) interior of the Kirwan polytope.
\end{fact}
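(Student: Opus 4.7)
The plan is to verify the four properties of $A := \mu(M_{\max})\cap\mathfrak{t}_{+}^*$ (openness, density, connectedness, convexity) in turn, and then deduce the statement about the relative interior by a short convex-geometric argument. Convexity is the content of (iv). For openness in $\Psi(M)$, I would apply (iii) directly to $M_{\max}$: the map $\Psi$ is open onto its image, and $M_{\max}\subset M$ is open, so $\Psi(M_{\max})$ is open in $\Psi(M)$.

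Connectedness and density are consequences of the continuity of $\Psi$. Connectedness is automatic as the continuous image of a connected set. For density, I would use the standard fact that if $f\colon X\to Y$ is continuous and surjective, then $f$ sends dense subsets of $X$ to dense subsets of $Y$: any nonempty relatively open $V\subset f(X)$ has the form $V=W\cap f(X)$ for some open $W\subset Y$, and $f^{-1}(W)$ is then a nonempty open subset of $X$ which meets $D$ by density, so $f(D)\cap V\neq\emptyset$. Applied to $\Psi\colon M\to\Psi(M)$ with $D=M_{\max}$, this gives density of $A$ in $\Psi(M)$.

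It remains to show that $A\supset \mathrm{relint}\,\Psi(M)$, for which I would prove the following general lemma: a convex, relatively open, dense subset $A$ of a convex polytope $P$ must contain $\mathrm{relint}(P)$. Working inside the affine hull of $P$, so that $\mathrm{relint}(P)$ becomes an honest interior, suppose some point $p\in\mathrm{int}(P)$ were missing from $A$. Since $A$ is relatively open in $P$ and $P$ has nonempty interior, $A$ itself has nonempty interior in the affine hull; density of $A$ in $P$ then forces $p\in\overline{A}$, so $p$ is a boundary point of the convex set $A$. The supporting hyperplane theorem now supplies a hyperplane $H$ through $p$ with $A\subset H^{+}$, and by density $P\subset\overline{A}\subset H^{+}$. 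But this contradicts $p$ being an \emph{interior} point of $P$, since any neighborhood of $p$ in the affine hull meets both open half-spaces determined by $H$. Applying the lemma to $A=\Psi(M_{\max})$ and $P=\Psi(M)$ finishes the proof.

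None of the individual steps is genuinely difficult: each is essentially immediate given (iii) and (iv). The only step requiring any real work is the convex-geometric lemma in the last paragraph, and even there the subtle point is just to recognize that one needs both relative openness and density to pass from ``$A$ is a convex subset of the polytope'' to ``$A$ contains the relative interior''.
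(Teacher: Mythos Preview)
Your proof is correct and follows essentially the same route as the paper, which gives only a one-line justification (``Since $M_{\max}$ is a connected, open and dense subset of $M$, (iv) implies the following'') and leaves all details to the reader. You have simply unpacked that hint: convexity from (iv), openness from (iii), connectedness and density from continuity of $\Psi$, and then a standard supporting-hyperplane argument for the relative interior --- none of which the paper writes out. One small point worth making explicit is the identification $\mu(M_{\max})\cap\mathfrak{t}_+^* = \Psi(M_{\max})$, which holds because $M_{\max}$ is $K$-invariant; you use this implicitly when you apply (iii) to $\Psi$ rather than to $\mu$.
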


\noindent Denote by $G = K^{\mathbb C}$ the complexification of $K$. This is
a complex reductive group.
Let us assume as in \cite{Brion87} that $M \subset \mathbb P(\mathcal H)$ is
a non-singular $G$-invariant irreducible subvariety of the complex projective
space associated with a rational $G$-representation $\mathcal H$, and let us
choose a $K$-invariant inner product on $\mathcal H$. Then $M$ is a
symplectic manifold when equipped with the restriction of the Fubini--Study
form,
\begin{equation}
  \label{eq:fubinistudy}
  \omega_{[v]}(\hat{A}_{[v]}, \hat{B}_{[v]})
  = 2\,\mathrm{Im} \frac{\bk{A v}{B v}}{\bk vv}
  = - i \frac{\bk{[A,B]v}v}{\bk vv}
  \qquad
  \forall A,B\in\mathfrak{u}(\mathcal{H}),
\end{equation}
where $[v] \in \mathbb P(\mathcal H)$ is the projection of a vector
$v\in\mathcal{H}$, and where $\hat{A}$ is the fundamental vector field
generated by the action of $A \in \mathfrak u(\mathcal H)$. More concretely,
we can represent the tangent space $T_{[v]}\mathbb{P}(\mathcal{H})$ by
$\mathcal H / \mathbb C v \cong (\mathbb C v)^\perp$. In this picture, the
tangent vector $A_v$ is given by the orthogonal projection of $A v$ onto
$(\mathbb C v)^\perp$. $M$ also carries a canonical momentum map $\mu : M
\subset \mathbb{P}(\mathcal{H})\rightarrow\mathfrak{k}^{\ast}$
\begin{equation}
  \label{momentPH}
  \langle\mu([v]),\, A\rangle = - i \frac{\bk v{Av}}{\bk vv} \quad
\forall A\in\mathfrak{k},
\end{equation}
In this situation, there are further convexity results:

\begin{enumerate}
\item[(v)] The image $\Psi(\overline{G.x})$ is a convex polytope
    \cite{Brion87,Ness84}.
\item[(vi)] The collection of different polytopes $\Psi(\overline{G.x})$,
    where $x$ ranges over $M$, is finite \cite{GuliSjamaar06}.
\end{enumerate}

\subsection{Fibers of the momentum map}

The following two facts characterizing kernel and image of the differential
of the momentum map are well-known consequences of the definition
\cite{GS82}:
\begin{fact} 
\label{kernel_mu} The kernel of $d\mu\big|_x :
T_{x}M\rightarrow\mathfrak{k}^*$ is equal to the $\omega$-orthogonal
complement of $T_{x}(K.x)$, i.e.
\begin{equation*}
  \Ker(d\mu\big|_x) =
  \{ Y \in T_x M : \omega(\hat\xi, Y) = 0 \quad \forall \xi \in \mathfrak k \} =
  \left( T_{x}(K.x) \right)^{\perp\omega}.
\end{equation*}
\end{fact}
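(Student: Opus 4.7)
The plan is to unwind the defining property of the momentum map, namely $d\mu_\xi = \omega(\hat\xi,\cdot)$, and use the fact that the tangent space to the orbit $K.x$ is spanned by the values of fundamental vector fields at $x$. First I would fix $Y \in T_x M$ and compute, for an arbitrary $\xi \in \mathfrak{k}$, the pairing
\begin{equation*}
  \langle d\mu\big|_x(Y),\,\xi\rangle = d(\langle \mu,\xi\rangle)\big|_x(Y) = d\mu_\xi\big|_x(Y) = \omega_x(\hat\xi(x), Y),
\end{equation*}
where the first equality holds because the pairing with $\xi$ is linear and independent of the base point, and the last equality is exactly condition (i) of Definition \ref{moment_map_def}.

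From here the argument is essentially a matter of quantifiers. Since $d\mu\big|_x(Y) \in \mathfrak{k}^*$ vanishes if and only if its pairing with every $\xi \in \mathfrak{k}$ vanishes, the displayed identity gives
\begin{equation*}
  Y \in \Ker(d\mu\big|_x) \iff \omega_x(\hat\xi(x), Y) = 0 \quad \forall \xi \in \mathfrak{k}.
\end{equation*}
To conclude, I would invoke the standard identification $T_x(K.x) = \{\hat\xi(x) : \xi \in \mathfrak{k}\}$, which follows directly from differentiating the orbit map $g \mapsto \Phi_g(x)$ at the identity. Since $\omega_x$ is bilinear, vanishing of $\omega_x(\hat\xi(x),Y)$ for every $\xi \in \mathfrak{k}$ is equivalent to vanishing of $\omega_x(Z,Y)$ for every $Z \in T_x(K.x)$, which is the definition of $Y \in (T_x(K.x))^{\perp \omega}$. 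Combining the two equivalences yields the stated equality of subspaces. There is no real obstacle here beyond bookkeeping: everything is a direct consequence of the definition of the momentum map and of the orbit tangent space, so the only care needed is to distinguish clearly between the pairing $\langle\cdot,\cdot\rangle$ on $\mathfrak{k}^* \times \mathfrak{k}$ and the symplectic form $\omega$ on $T_x M$.
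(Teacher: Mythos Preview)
Your argument is correct and is precisely the standard unwinding of condition (i) in Definition~\ref{moment_map_def} that the paper alludes to when it calls this a ``well-known consequence of the definition'' (with a reference to \cite{GS82}) rather than giving an explicit proof. There is nothing to add: the paper omits the proof entirely, and what you wrote is the expected one-line derivation.
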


\begin{fact} 
\label{range_mu} The image of $d\mu\big|_x : T_{x}M\rightarrow\mathfrak{k}^*$
is equal to the annihilator of $\mathfrak{k}_{x}$, the Lie algebra of the
isotropy subgroup $K_{x} \subset K$.
\end{fact}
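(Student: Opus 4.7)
The plan is to prove the containment $\mathrm{Im}(d\mu|_x) \subseteq \mathfrak{k}_x^\circ$ directly from the defining property of the momentum map, and then to deduce equality by a dimension count based on Fact~\ref{kernel_mu}. Here $\mathfrak{k}_x^\circ \subset \mathfrak{k}^*$ denotes the annihilator of $\mathfrak{k}_x$.

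First, for any $Y \in T_x M$ and $\xi \in \mathfrak{k}$, condition (i) of Definition~\ref{moment_map_def} gives
\[
\langle d\mu|_x(Y),\,\xi\rangle \;=\; d\mu_\xi|_x(Y) \;=\; \omega_x(\hat\xi(x),\, Y).
\]
If $\xi \in \mathfrak{k}_x$, then $e^{t\xi}$ fixes $x$ for every $t$, so the fundamental vector field $\hat\xi(x) = \frac{d}{dt}\big|_{t=0}\Phi(e^{t\xi},x)$ vanishes at $x$. Hence the pairing above is zero for every $\xi \in \mathfrak{k}_x$, which is exactly the statement that $d\mu|_x(Y) \in \mathfrak{k}_x^\circ$. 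This yields the inclusion $\mathrm{Im}(d\mu|_x) \subseteq \mathfrak{k}_x^\circ$.

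For the reverse inclusion I would argue by dimensions. Identifying the orbit with $K/K_x$ gives $T_x(K.x) \cong \mathfrak{k}/\mathfrak{k}_x$, whence $\dim T_x(K.x) = \dim \mathfrak{k} - \dim \mathfrak{k}_x$. By Fact~\ref{kernel_mu}, $\Ker(d\mu|_x) = (T_x(K.x))^{\perp\omega}$, and nondegeneracy of $\omega_x$ gives $\dim (T_x(K.x))^{\perp\omega} = \dim M - \dim T_x(K.x)$. Rank-nullity then yields
\[
\dim \mathrm{Im}(d\mu|_x) \;=\; \dim T_x M - \dim \Ker(d\mu|_x) \;=\; \dim \mathfrak{k} - \dim \mathfrak{k}_x \;=\; \dim \mathfrak{k}_x^\circ,
\]
so the two subspaces coincide. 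There is no real obstacle here; the argument is essentially linear algebra once the definitions are unpacked, the crux being the elementary observation that $\hat\xi(x)$ vanishes precisely when $\xi$ lies in the isotropy Lie algebra $\mathfrak{k}_x$, which is what couples condition (i) of Definition~\ref{moment_map_def} to the isotropy data.
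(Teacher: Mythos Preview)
Your argument is correct: the inclusion $\mathrm{Im}(d\mu|_x)\subseteq\mathfrak{k}_x^\circ$ follows immediately from $\hat\xi(x)=0$ for $\xi\in\mathfrak{k}_x$, and the dimension count via Fact~\ref{kernel_mu} and nondegeneracy of $\omega$ closes the gap. The paper does not actually supply a proof of this fact---it is stated as a ``well-known consequence of the definition'' with a reference to \cite{GS82}---so your write-up is precisely the standard argument one finds there, and there is nothing further to compare.
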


\noindent These results have immediate consequences for the characterization
of the fibers of $\mu$:
Let us assume that $d\mu$ is surjective at a single point $x \in M$ (as will be the
case in our applications).
Using Fact~\ref{range_mu}, we observe that $d\mu$ is
surjective if and only if the Lie algebra $\mathfrak k_{x}$ is trivial. This
in turn happens if and only if the isotropy subgroup $K_{x}$ is discrete.
Therefore, $d\mu$ is automatically surjective at all points in $M_{\max}$.
In particular, the implicit function theorem together with Fact~\ref{kernel_mu}
implies that for all $x\in M_{\max}$, the tangent space at $x$ of the
$\mu$-fiber through $x$ has dimension
\begin{eqnarray}
  \label{Mmindim}
  \dim T_{x} (\mu^{-1}(\mu(x))) =
  \dim \left( T_{x}(K.x) \right)^{\perp\omega} =
  \dim M - \dim \mathfrak k^*,
  \label{eq:regular}
\end{eqnarray}
and hence:

\begin{fact}
\label{Mmin} For the points $x\in M_{\max}$, i.e.\ for an open, connected and
dense subset of $M$, the intersection $\mu^{-1}(\mu(x))\cap M_{\max}$ of the
$\mu$-fiber through $x$ with $M_{\max}$ is a $\dim \left( T_x (K.x)
\right)^{\perp\omega}$-dimensional manifold.
\end{fact}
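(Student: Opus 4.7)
The plan is to combine the surjectivity of $d\mu$ on all of $M_{\max}$ with the regular value theorem. As a first step I would verify that the hypothesis --- surjectivity of $d\mu$ at some single point $x_0$ --- upgrades to surjectivity throughout $M_{\max}$. By Fact~\ref{range_mu}, $d\mu_{x_0}$ being surjective is equivalent to $\mathfrak{k}_{x_0} = 0$, and hence to $\dim K.x_0 = \dim K$. Since no orbit can have dimension larger than $\dim K$, this is necessarily the maximal orbit dimension, so $\mathfrak{k}_x = 0$ for every $x \in M_{\max}$ as well, and a second application of Fact~\ref{range_mu} yields that $d\mu_x$ is onto at every such point.

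Next I would apply the regular value theorem to the restriction $\mu|_{M_{\max}} \colon M_{\max} \to \mathfrak{k}^{\ast}$, noting that $M_{\max}$ is an open subset of $M$ by \cite{Montgomery59} and hence inherits the smooth structure and the symplectic form. Since every point of $M_{\max}$ is now regular for $\mu|_{M_{\max}}$, every value is a regular value, and $\mu^{-1}(\mu(x)) \cap M_{\max}$ is a smooth embedded submanifold of $M_{\max}$ of codimension $\dim \mathfrak{k}^{\ast}$.

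For the dimension identification I would invoke Fact~\ref{kernel_mu}: the tangent space to the fiber at $x$ is $\Ker(d\mu_x) = (T_x(K.x))^{\perp\omega}$. Non-degeneracy of $\omega$, together with $\dim T_x(K.x) = \dim K = \dim \mathfrak{k}^{\ast}$ on $M_{\max}$, gives $\dim (T_x(K.x))^{\perp\omega} = \dim M - \dim \mathfrak{k}^{\ast}$, matching the codimension computed above. There is no substantial obstacle here; the statement essentially packages Facts~\ref{kernel_mu} and~\ref{range_mu} together with the implicit function theorem. The only point requiring a moment's care is the propagation of surjectivity from a single point to all of $M_{\max}$, which is forced by the dimensional reading of Fact~\ref{range_mu} and the definition of $M_{\max}$.
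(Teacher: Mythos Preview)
Your proposal is correct and follows essentially the same approach as the paper: both propagate surjectivity of $d\mu$ from a single point to all of $M_{\max}$ via Fact~\ref{range_mu}, invoke the implicit function/regular value theorem on the open set $M_{\max}$, and identify the fiber's tangent space and dimension through Fact~\ref{kernel_mu}. The only cosmetic difference is that the paper phrases the smoothness step as ``the implicit function theorem'' while you say ``regular value theorem''; the content is identical.
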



\noindent Understanding the fibers $\mu^{-1}(\alpha)$ away from the regular
points in $M_{\max}$ is a more delicate problem. Since each fiber
$\mu^{-1}(\alpha)$ is $K_\alpha$-invariant, where $K_\alpha$ is the isotropy
subgroup of $\alpha$ with respect to coadjoint action, it is convenient to
introduce the symplectic quotient $M_\alpha := \mu^{-1}(\alpha) / K_\alpha =
\Psi^{-1}(\alpha) / K$. It is in general a stratified symplectic space
\cite{lermansjamaar91}, which is moreover connected by property (ii) in
Subsection~\ref{subs:convexity}. The following is then immediate:

\begin{fact}
  The fiber $\mu^{-1}(\alpha)$ intersects a single $K$-orbit if and only if
  $M_\alpha$ is a single point or, equivalently, if and only if $M_\alpha$ is
  zero-dimensional.
\end{fact}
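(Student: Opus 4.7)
The plan is to reduce the Fact to a formal unpacking of the definition $M_\alpha = \Psi^{-1}(\alpha)/K$ together with the connectedness of $M_\alpha$ recorded in the paragraph immediately above the statement. Neither of the two equivalences requires genuinely new input beyond definitions.

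For the equivalence "$\mu^{-1}(\alpha)$ intersects a single $K$-orbit $\Leftrightarrow$ $M_\alpha$ is a single point", I would first observe, directly from the definition of $\Psi$, that $\Psi^{-1}(\alpha)$ is precisely the $K$-saturation $K\cdot\mu^{-1}(\alpha)$: the condition $\Psi(x)=\alpha$ means that $\mu(K.x)\cap\mathfrak{t}_{+}^{\ast}=\{\alpha\}$, i.e.\ some $K$-translate of $x$ lies in $\mu^{-1}(\alpha)$. Consequently the $K$-orbits in $\Psi^{-1}(\alpha)$ are in bijection with the $K$-orbits that meet $\mu^{-1}(\alpha)$, and $M_\alpha = \Psi^{-1}(\alpha)/K$ collapses to a singleton precisely when $\mu^{-1}(\alpha)$ is contained in a single $K$-orbit. (Alternatively, working with the other presentation $M_\alpha=\mu^{-1}(\alpha)/K_\alpha$, one uses equivariance: if $y=\Phi_{k}(x)$ with $x,y\in\mu^{-1}(\alpha)$, then $\alpha=\mu(y)=\Ad_k^{\ast}\mu(x)=\Ad_k^{\ast}\alpha$, so $k\in K_\alpha$ and the two quotients coincide.)

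For the equivalence with zero-dimensionality of $M_\alpha$, the forward implication is trivial. For the converse I would use the connectedness of $M_\alpha$ together with its structure as a stratified symplectic space in the sense of \cite{lermansjamaar91}, which makes it Hausdorff and locally compact with finitely many strata. A connected such space all of whose strata have dimension zero is automatically a single point. The main subtlety I anticipate is purely notational: one has to fix the convention that $\dim M_\alpha$ is the maximum of the dimensions of its strata, so that "zero-dimensional" has an unambiguous meaning on this possibly singular quotient; with that convention in place, the argument goes through with no further work.
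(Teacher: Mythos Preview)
Your proposal is correct and matches the paper's own treatment: the paper gives no proof, simply declaring the Fact ``immediate'' from the connectedness of $M_\alpha$ and its stratified symplectic structure, which is exactly what you unpack. Your care in noting that the zero-dimensional direction relies on the existence of an open dense (connected) stratum in the Sjamaar--Lerman sense is appropriate and fills in the only step that is not purely definitional.
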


If $\alpha \in \Psi(M_{\max})$ then the maximal-dimensional stratum of
$M_\alpha$ is simply $(\mu^{-1}(\alpha) \cap M_{\max}) / K_\alpha$
\cite{meinrenken97}. Since the isotropy subgroup of any point in $M_{\max}$
is discrete,
\begin{equation}
  \label{symplquotdim}
  \dim M_\alpha = \dim M - \dim \mathfrak k^* - \dim K_\alpha
\end{equation}
In other words, we can by mere dimension counting determine whether a
$K$-orbit in $M_{\max}$ is uniquely determined by its image under the
momentum map.

\section{Fibers of the momentum map for three qubits}
\label{sec:fibers}

With the momentum map machinery presented in the preceding section we are now
well-equipped to analyze when a pure state of three qubits is, up to local
unitaries, determined by the spectra of its reduced density matrices.

To this end, let $M=\mathbb{P}(\mathcal{H})$ be the projective space of pure
states associated with the three-qubit Hilbert space
$\mathcal{H}=\mathbb{C}^{2}\otimes\mathbb{C}^{2}\otimes\mathbb{C}^{2}$. The
group of local unitaries $K=SU(2)^{\times3}$ and its complexification
$G=SL(2)^{\times3}$ act on $M$ by the tensor product. The Lie algebra of $K$
is
$\mathfrak{k}=\mathfrak{su}(2)\oplus\mathfrak{su}(2)\oplus\mathfrak{su}(2)$,
i.e.\ triples of traceless antihermitian matrices. As described in the
preceding section, $M$ is a symplectic manifold with respect to the
Fubini--Study form (\ref{eq:fubinistudy}) and the $K$-action is Hamiltonian
with canonical momentum map (\ref{momentPH}).

It is easy to see that under the identification of $\mathfrak{k}^*$ with
$\mathfrak{k}$ induced by the trace inner product, the image $\mu([v])$ is
given by the collection of one-qubit reduced density matrices, namely
\begin{equation}
  \label{momentum map three qubits}
  \mu([v]) = i (\rho_{1}-\frac{1}{2}I,\,\rho_{2}-\frac{1}{2}I,\,
\rho_{3}-\frac{1}{2}I),
\end{equation}
where $I$ is the $2\times2$ identity matrix (see, e.g.,
\cite{Sawicki11,ChristandlDoranKousidisWalter}).

Let us fix the maximal torus $T \subset K$ to be the set of unitary diagonal
matrices with determinant equal to one. Then the Lie algebra $\mathfrak{t}$ is
equal to the space of traceless antihermitian diagonal matrices.
We choose as the positive Weyl chamber the following set of matrices:
\begin{equation}
  \label{positive weyl chamber}
  \mathfrak{t}^*_{+} =
  \left\{
  (\diag(-i\lambda_{1},i\lambda_{1}),\,\diag(-i\lambda_{2},i\lambda_{2}),\,
\diag(-i\lambda_{3},i\lambda_{3}))
  :
  \lambda_i \geq 0
  \right\}.
\end{equation}
It follows that, up to some rescaling and shifting, the map $\Psi$ sends a
pure state $[v]$ to the (ordered) spectra of its reduced density matrices
$\rho_1, \rho_2, \rho_3$.

The following theorem describes the Kirwan polytope in terms of inequalities
for a general system of $N$ qubits:

\begin{fact}[\cite{Higuchi03}]
\label{higuchi}
For a $N$-qubit system, the constraints on the one-qubit reduced density
matrices $\rho_i$ of a pure state are given by the \emph{polygonal inequalities}
\begin{eqnarray*}
  p_{i}\leq\sum_{j\neq i}p_{j},
\end{eqnarray*}
where $p_{i} \leq \frac 1 2$ denotes the minimal eigenvalue of $\rho_{i}$
($i=1,\ldots,N$).
\end{fact}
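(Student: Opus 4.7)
The plan is to establish both inclusions between the Kirwan polytope $\Psi(M)$ and the candidate polytope
$P = \{(p_1, \ldots, p_N) \in [0, \tfrac{1}{2}]^N : p_i \leq \sum_{j \neq i} p_j \text{ for every } i\}$, using the convexity theorem recalled in Subsection~\ref{subs:convexity} to make the sufficiency direction essentially combinatorial.

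For necessity $\Psi(M) \subseteq P$, I would fix a qubit $i$ and Schmidt-decompose $|\psi\rangle$ across the bipartite cut $(\{i\}, \hat{i})$,
\begin{eqnarray*}
|\psi\rangle = \sqrt{1-p_i}\,|0\rangle_i \otimes |u\rangle_{\hat{i}} + \sqrt{p_i}\,|1\rangle_i \otimes |v\rangle_{\hat{i}},
\end{eqnarray*}
where $|0\rangle_i, |1\rangle_i$ are the eigenvectors of $\rho_i$ (with eigenvalues $1-p_i \geq p_i$) and $|u\rangle, |v\rangle$ are orthonormal states of the remaining $N-1$ qubits. Tracing out qubit $i$ yields $\rho_{\hat{i}} = (1-p_i)|u\rangle\langle u| + p_i|v\rangle\langle v|$, and a further partial trace expresses each $\rho_j$ for $j \neq i$ as a convex combination $(1-p_i)\sigma_j^{(u)} + p_i \sigma_j^{(v)}$ of the one-qubit marginals of $|u\rangle$ and $|v\rangle$. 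The polygonal inequality $p_i \leq \sum_{j \neq i} p_j$ then follows from an operator-theoretic estimate on the smallest eigenvalue of this convex combination, crucially exploiting the orthogonality $\langle u|v\rangle = 0$. For $N=3$ a direct case analysis on the Schmidt decompositions of $|u\rangle$ and $|v\rangle$ suffices; for general $N$ one proceeds by induction, using that $|u\rangle$ and $|v\rangle$ are themselves pure states of $N-1$ qubits.

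For sufficiency $P \subseteq \Psi(M)$, I would invoke convexity of the Kirwan polytope. Since any convex polytope is the convex hull of its vertices, it suffices to realise the vertices of $P$. An elementary analysis shows these are the origin $(0, \ldots, 0)$ together with the $\{0, \tfrac{1}{2}\}$-valued points having at least two coordinates equal to $\tfrac{1}{2}$. Each such vertex admits an explicit preimage: the origin by any fully separable state, and a vertex with $p_i = \tfrac{1}{2}$ precisely for $i \in S$ (where $|S| \geq 2$) by the GHZ state $(|0\rangle^{\otimes |S|} + |1\rangle^{\otimes |S|})/\sqrt{2}$ on the qubits in $S$ tensored with product states on the remaining qubits. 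Convexity of $\Psi(M)$ then forces $P \subseteq \Psi(M)$.

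The main obstacle is the operator-theoretic step in necessity. Applying Weyl's inequality $\lambda_{\min}(A+B) \geq \lambda_{\min}(A) + \lambda_{\min}(B)$ to $\rho_j = (1-p_i)\sigma_j^{(u)} + p_i \sigma_j^{(v)}$ yields only $p_j \geq (1-p_i)\lambda_{\min}(\sigma_j^{(u)}) + p_i \lambda_{\min}(\sigma_j^{(v)})$, which becomes vacuous when $|u\rangle$ and $|v\rangle$ are close to product on $\hat{i}$. The orthogonality of $|u\rangle$ and $|v\rangle$ must be brought in to sharpen the bound quantitatively. By contrast, sufficiency is essentially automatic once the vertices of $P$ are identified, illustrating once again how the convexity theorem converts the quantum marginal problem from an infinite-dimensional search into a finite combinatorial one.
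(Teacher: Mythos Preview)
The paper does not prove this statement at all: it is stated as a \emph{Fact} with a citation to \cite{Higuchi03} (see also \cite{bravyi04}), consistent with the authors' explicit remark that they ``prove only new results and otherwise give references to the literature.'' There is therefore no proof in the paper against which to compare your attempt.

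As to the attempt itself: your sufficiency argument is sound and is indeed the standard way to fill in the polytope once convexity is available. The necessity argument, however, is not a proof but a plan with a hole you yourself flag. After the Schmidt decomposition across the cut $(\{i\},\hat i)$ you obtain $\rho_j=(1-p_i)\sigma_j^{(u)}+p_i\sigma_j^{(v)}$ and then need to show $\sum_{j\neq i}\lambda_{\min}(\rho_j)\geq p_i$. You correctly observe that Weyl's inequality is too weak here, but you do not supply the sharpening that ``brings in the orthogonality of $\ket u$ and $\ket v$.'' Nor is the inductive strategy convincing as stated: the induction hypothesis gives polygonal inequalities among the one-qubit marginals of $\ket u$ and of $\ket v$ separately, but the marginals $\sigma_j^{(u)}$ and $\sigma_j^{(v)}$ need not be simultaneously diagonal, and there is no evident mechanism for combining the two families of inequalities into the single inequality $p_i\leq\sum_{j\neq i}p_j$ for $\ket\psi$. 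The ``direct case analysis'' promised for $N=3$ is likewise not carried out. In short, the key quantitative step is missing; you would need either to carry out the orthogonality argument explicitly (as in \cite{Higuchi03}) or to replace this route by an entirely different estimate before the necessity direction can be considered proved.
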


\noindent The Kirwan polytope for three qubits is shown in
Figure~\ref{figure:kirwan-three-qubits}.

\begin{figure}[H]
~~~~~~~~~~~~~~~~~~~~~~\includegraphics[scale=0.6]{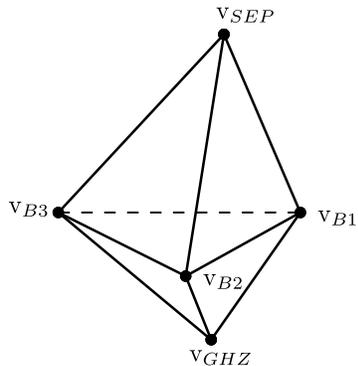}

\caption{The Kirwan polytope $\Psi(M)=\mu(M)\cap\mathfrak{t}_{+}^{\ast}$ for
three qubits.}
\label{figure:kirwan-three-qubits}
\end{figure}

\noindent This polytope has $5$ vertices, $9$ edges and $6$ faces.
As a convex set it is of course generated by its vertices, which are
\begin{eqnarray}
\label{eq:vertices}
\mathrm{v}_{\SEP}=i\{\diag(-\nicefrac{1}{2},\nicefrac{1}{2}),\,\diag(-\nicefrac{1}{2},\nicefrac{1}{2}),\,\diag(-\nicefrac{1}{2},\nicefrac{1}{2})\}, \nonumber \\
\mathrm{v}_{B1}=i\{\mathrm{\diag\ensuremath{(-\nicefrac{1}{2},\nicefrac{1}{2}),\,}diag}(0,0),\,\diag(0,0)\}, \nonumber\\
\mathrm{v}_{B2}=i\{\diag(0,0),\,\diag(-\nicefrac{1}{2},\nicefrac{1}{2}),\mathrm{,\,\ diag}(0,0)\},\\
\mathrm{v}_{B3}=i\{\diag(0,0),\,\diag(0,0),\,\diag(-\nicefrac{1}{2},\nicefrac{1}{2})\}, \nonumber\\
\mathrm{v}_{\GHZ}=i\{\mathrm{\diag\ensuremath{(0,0),\,}diag}(0,0),\,\diag(0,0)\}.\nonumber
\end{eqnarray}

\noindent Since the polytope is full-dimensional, Sard's theorem
\cite{sternberg64} implies the existence of a regular point in $M$, i.e.\ a
point with discrete isotropy subgroup (see also \cite{CS00}). Hence,
\begin{lema}
  The set $M_{\max} \subset M$ is connected, open and dense and consists of
  orbits of dimension $\dim K = 9$.
\end{lema}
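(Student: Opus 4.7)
The plan is to split the lemma into its three topological assertions (openness, connectedness, density of $M_{\max}$) and the dimensional assertion ($\dim K.x = 9$ on $M_{\max}$). The first three properties are already recalled in Subsection~\ref{subs:convexity} as a general consequence of the smooth action of a compact Lie group on a connected manifold, proved by Montgomery and Yang \cite{Montgomery59}, so they require no further work. Since every $K$-orbit in $M$ has dimension at most $\dim K = 9$, what remains is to exhibit a single point $x_0 \in M$ with discrete isotropy subgroup, equivalently with $\mathfrak{k}_{x_0} = 0$.

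To produce such an $x_0$, I would apply Sard's theorem to the smooth map $\mu\colon M \to \mathfrak{k}^*$. The explicit description of the Kirwan polytope in Fact~\ref{higuchi}, together with the vertex list~\eqref{eq:vertices} and Figure~\ref{figure:kirwan-three-qubits}, shows that $\Psi(M) \subset \mathfrak{t}^*_+$ is full-dimensional of real dimension three. Since $\mu(M)$ is the $\Ad^*(K)$-saturation of $\Psi(M)$, and since generic coadjoint orbits in $\mathfrak{k}^* \cong (\mathfrak{su}(2)^{\oplus 3})^*$ have dimension $\dim K - \dim T = 6$, a straightforward dimension count shows that $\mu(M)$ has positive Lebesgue measure inside the nine-dimensional ambient vector space $\mathfrak{k}^*$. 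Sard's theorem then forces $\mu(M)$ to contain a regular value of $\mu$, whose preimage contains a regular point $x_0 \in M$.

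Finally, at such a regular point Fact~\ref{range_mu} identifies $\mathrm{Im}(d\mu|_{x_0})$ with the annihilator of $\mathfrak{k}_{x_0}$, so the surjectivity of $d\mu|_{x_0}$ forces $\mathfrak{k}_{x_0} = 0$ and hence $\dim K.x_0 = \dim K = 9$, as required. The only substantive geometric input is the full-dimensionality of the Kirwan polytope, which is transparent from Fact~\ref{higuchi} and the figure; I do not anticipate any serious obstacle beyond making the measure-theoretic step precise.
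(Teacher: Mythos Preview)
Your proposal is correct and follows essentially the same approach as the paper. The paper's argument is the single sentence ``Since the polytope is full-dimensional, Sard's theorem implies the existence of a regular point in $M$, i.e.\ a point with discrete isotropy subgroup (see also \cite{CS00})'', together with the earlier citation of \cite{Montgomery59} for the general openness/connectedness/density of $M_{\max}$; you have simply unpacked the Sard step by noting that the $\Ad^*(K)$-saturation of a full-dimensional polytope has positive measure in $\mathfrak{k}^*$, so the critical-value set cannot exhaust $\mu(M)$.
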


\noindent We will now analyze points inside the interior of the Kirwan
polytope. Notice first that by Fact~\ref{interior} the preimage of any
such point $\alpha$ contains a point $x\in M_{\max}$. Therefore,
Fact~\ref{Mmin} and (\ref{Mmindim}) show that $\mu^{-1}(\alpha) \cap
M_{\max}$ is a manifold of dimension
\begin{equation*}
  \dim (\mu^{-1}(\alpha) \cap M_{\max}) =
  \dim (T_{x}K.x)^{\perp\omega} =
  \dim M - \dim \mathfrak k^* = 14 - 9 = 5.
\end{equation*}
Since $K_\alpha = T$ for points in the interior of the positive Weyl chamber,
this manifold consists of $3$-dimensional $T$-orbits, and
(\ref{symplquotdim}) implies that
\begin{equation*}
  \dim M_\alpha = 5 - 3 = 2.
\end{equation*}
If we replace $\mu^{-1}(x)$ by $\Psi^{-1}(x) = K . \mu^{-1}(x)$, we replace
each $T$-orbit by a $K$-orbit and therefore increase the dimension by
\begin{equation*}
  \dim \Omega_\alpha = \dim \nicefrac K {K_\alpha} = \dim K - \dim T = 6,
\end{equation*}
the dimension of the corresponding coadjoint orbit $\Omega_\alpha = K .
\alpha \subset \mathfrak k^*$.


Summing up, we proved:

\begin{thm}\label{interior-char}
  For any point $\alpha$ inside the interior of the Kirwan polytope there
  exists a point $x\in M_{\max}$ such that $\alpha=\mu(x)$. The manifold
  $\mu^{-1}(\alpha) \cap M_{\max}$ is $5$-dimensional, consisting of
  $3$-dimensional $T$-orbits. Moreover, $\Psi^{-1}(\alpha) \cap M_{\max}$ is
  an $11$-dimensional manifold consisting of $9$-dimensional orbits $K.y$
  with the property $\mu(K.y)=\Omega_\alpha$, and the symplectic quotient
  $M_\alpha$ has dimension $2$.
\end{thm}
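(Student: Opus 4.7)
The plan is to assemble dimension counts that have essentially been laid out in the paragraphs immediately preceding the theorem; no new structural input is needed. Step one is existence: for $\alpha$ in the relative interior of the Kirwan polytope, Fact~\ref{interior} produces a preimage $x\in M_{\max}$ with $\mu(x)=\alpha$, since $\mu(M_{\max})\cap\mathfrak{t}^*_+$ is open and dense in $\Psi(M)$ and contains that interior.

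Step two handles the $\mu$-fiber. Because $x\in M_{\max}$ its isotropy $K_x$ is discrete, so Fact~\ref{range_mu} gives surjectivity of $d\mu|_x$, whence $\alpha$ is a regular value on $M_{\max}$. The implicit function theorem together with Fact~\ref{kernel_mu} (equivalently (\ref{Mmindim})) shows that $\mu^{-1}(\alpha)\cap M_{\max}$ is a manifold of dimension $\dim M-\dim\mathfrak{k}^*=14-9=5$. For $\alpha$ interior to the positive Weyl chamber the coadjoint isotropy is $K_\alpha=T$, and $T$ acts on this manifold with orbits of dimension $\dim T=3$: indeed $T_x=T\cap K_x$ is contained in the discrete group $K_x$ and hence is finite, so each $T$-orbit has full dimension. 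Formula (\ref{symplquotdim}) then gives $\dim M_\alpha=5-3=2$.

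Step three transports the $\mu$-fiber across the coadjoint orbit. By $K$-equivariance $\Psi^{-1}(\alpha)=\mu^{-1}(K.\alpha)=\mu^{-1}(\Omega_\alpha)$, and the natural map $K\times_T(\mu^{-1}(\alpha)\cap M_{\max})\to\Psi^{-1}(\alpha)\cap M_{\max}$ is a diffeomorphism, yielding dimension $5+(\dim K-\dim T)=5+6=11$. Each $K$-orbit inside this set is $9$-dimensional by discreteness of isotropy, and the equivariance of $\mu$ together with the transitivity of $K$ on $\Omega_\alpha$ forces $\mu(K.y)=K.\mu(y)=\Omega_\alpha$.

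I do not expect any genuine obstacle here, only careful bookkeeping: the theorem is essentially a dimension count once Facts~\ref{kernel_mu}--\ref{Mmin} and Fact~\ref{interior} are in place. The only mildly delicate point is confirming that the $T$-action on $\mu^{-1}(\alpha)\cap M_{\max}$ is locally free, which as noted reduces to the fact that a discrete subgroup of the compact connected torus $T$ is finite; everything else is linear algebra on top of the input facts.
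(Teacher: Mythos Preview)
Your proposal is correct and follows essentially the same approach as the paper: both arguments combine Fact~\ref{interior} for existence, the dimension formula (\ref{Mmindim}) from Facts~\ref{kernel_mu}--\ref{range_mu} for the $\mu$-fiber, the identification $K_\alpha=T$ for interior $\alpha$, and then pass to $\Psi^{-1}(\alpha)$ by sweeping with $K$ over the coadjoint orbit $\Omega_\alpha$. Your write-up is in fact slightly more explicit than the paper's at two points (the finiteness of $T_x\subset K_x$ justifying that $T$-orbits are $3$-dimensional, and the associated-bundle description $K\times_T(\mu^{-1}(\alpha)\cap M_{\max})\cong\Psi^{-1}(\alpha)\cap M_{\max}$), but these are refinements of the same argument rather than a different route.
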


\noindent The following is a direct consequence of Theorem \ref{interior-char}
(cf.\ the discussion at the end of Section~\ref{sec:momentum}):

\begin{cor}
The orbits $K.x$ for which $\mu(K.x)\cap\mathfrak{t}_{+}^*$ belongs to the
interior of Kirwan polytope \emph{cannot} be separated by the momentum map.
In other words, a pure state of three qubits whose spectrum is non-degenerate
and satisfies the polygonal inequalities with strict inequality is
\emph{never} determined up to local unitaries by the spectra of its reduced
density matrices.
\end{cor}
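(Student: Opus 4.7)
The plan is to read the corollary directly off Theorem~\ref{interior-char} by applying the characterization, recorded in the (unnumbered) fact stated just above equation~(\ref{symplquotdim}), that the fiber $\mu^{-1}(\alpha)$ is contained in a single $K$-orbit if and only if the symplectic quotient $M_\alpha$ is zero-dimensional. This fact is the bridge between the dimension computation of Theorem~\ref{interior-char} and the qualitative separation statement that the corollary asserts.

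Concretely, I would fix any point $\alpha$ in the interior of the Kirwan polytope and invoke Theorem~\ref{interior-char} to conclude that $\dim M_\alpha = 2$. Since $2 \neq 0$, the cited fact immediately forces $\mu^{-1}(\alpha)$, and hence also $\Psi^{-1}(\alpha) = K.\mu^{-1}(\alpha)$, to meet more than one $K$-orbit. Equivalently, there exist locally inequivalent states $[v]$ and $[w]$ with $\Psi([v]) = \Psi([w]) = \alpha$, which is exactly the failure of separation claimed in the corollary.

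It remains to reconcile the geometric hypothesis ``$\alpha$ lies in the interior of the Kirwan polytope'' with the physical description in terms of the spectra of the $\rho_i$. Under the parametrization~(\ref{positive weyl chamber}), each $\rho_i$ has eigenvalues $\tfrac{1}{2} \pm \lambda_i$, so the interior of the positive Weyl chamber ($\lambda_i > 0$) corresponds precisely to each $\rho_i$ having a non-degenerate spectrum, while the interior of the polytope inside $\mathfrak{t}^*_+$ is cut out, via Fact~\ref{higuchi}, by the strict polygonal inequalities $p_i < \sum_{j\neq i} p_j$. Combining these two conditions gives exactly the interior of the polytope, and hence the stated hypothesis of the corollary.

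I do not foresee any genuine difficulty here: the only content beyond Theorem~\ref{interior-char} is the translation from the dimension of $M_\alpha$ to orbit separation, handled by the cited fact, together with the routine identification of the polytope's interior with the inequality conditions on the spectra of the reduced density matrices.
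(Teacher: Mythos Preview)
Your proposal is correct and mirrors exactly what the paper does: it declares the corollary a direct consequence of Theorem~\ref{interior-char} together with the fact (just above~(\ref{symplquotdim})) that $M_\alpha$ is a single point if and only if the fiber lies in one $K$-orbit, and your identification of the polytope's interior with non-degenerate spectra plus strict polygonal inequalities is precisely the dictionary implicit in the corollary's restatement. There is nothing to add.
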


\noindent What is left is to analyze points in the boundary of the Kirwan
polytope. We postpone this to the end of the section (see Subsection
\ref{sec:boundary}).

\subsection{The SLOCC classes and their Kirwan polytopes}

As mentioned in the introduction, the classification of $G$-orbits, where
$G=K^\mathbb{C}$ gives another view on the entanglement properties of states
on the composite system. In this subsection we explicitly compute the Kirwan
polytopes $\Psi(\overline{G.x})$ for all $G$-orbit closures, and show how
they are related to the polytope $\Psi(\mathbb{P}(\mathcal{H}))$. Physically,
$G$-orbits correspond to SLOCC classes \cite{Dur00}, hence our results
describe the spectra of the reduced density matrices of the quantum states in
each SLOCC entanglement class. The problem of classifying $G$-orbits in
$\mathbb{P}(\mathcal{H})$ is inherently connected to the momentum map
geometry as well as to the construction of the so-called Mumford quotient
\cite{Kirwan84,Ness84}. We explain this connection in
\cite{SOK12,WalterDoranGrossChristandl}.

\paragraph{Classification of $G$-orbits}
It has been shown in \cite{Dur00} that there are six SLOCC entanglement
classes, i.e.\ $G$-orbits in $\mathbb{P}(\mathcal{H})$. For convenience of
the reader we list them below and briefly summarize their basic geometric
properties:
\begin{enumerate}
\item The $G$-orbit of the Greenberger--Horne--Zeilinger state, $x_{\GHZ}
    = [\frac{1}{\sqrt{2}} \left( \ket{000}+\ket{111} \right)]$
    \cite{GHZ}: It is the open dense orbit, hence of (real) dimension
    $\dim \mathbb{P}(\mathcal{H}) = 14$.
\item The $G$-orbit of the $W$-state, $x_W =[\frac{1}{\sqrt{3}} \left(
    \ket{100}+\ket{010}+\ket{001} \right)]$: Here, $\dim G . x_W = 12$,
    while $\dim K . x_W = 8$. For the proof, it is enough to compute the
    dimension of the tangent space $T_{x_W} (G.x_W)$, which can be
    represented as the projection of $\Span_{\mathbb C} \{A x_W : A \in
    \mathfrak g \}$ onto the orthogonal complement of $x_W$. The Lie
    algebra $\mathfrak{g}$ is equal to $\mathfrak{sl}_2(\mathbb C) \oplus
    \mathfrak{sl}_2(\mathbb C) \oplus \mathfrak{sl}_2(\mathbb C)$, where
\begin{eqnarray*}
  \mathfrak{sl}_2(\mathbb C) = \Span_{\mathbb C} \left\{ E_{12}, \, E_{21},
\, E_{11} - E_{22} \right\},
\end{eqnarray*}
and $E_{ij}$ is the $2\times2$ matrix with a single non-zero entry equal
to one in the $i$-th row and $j$-th column. It is easy to see that the
following seven vectors
\begin{eqnarray*}
  (E_{12}\otimes I\otimes I)x_W=(I\otimes E_{12}\otimes I)x_W=(
  I\otimes I\otimes E_{12})x_W \propto \ket{000},\\
  (E_{21}\otimes I\otimes I)x_W \propto \ket{110}+\ket{101}, \\
  (I\otimes E_{21}\otimes I)x_W \propto \ket{110}+\ket{011}, \\
  (I\otimes I\otimes E_{21})x_W \propto \ket{101}+\ket{011}, \\
  ((E_{11}-E_{22}) \otimes I \otimes I) x_W \propto -\ket{100} + \ket{010}
  + \ket{001}, \\
  (I \otimes (E_{11}-E_{22}) \otimes I) x_W \propto +\ket{100} - \ket{010}
  + \ket{001}, \\
  (I \otimes I \otimes (E_{11}-E_{22})) x_W \propto +\ket{100} + \ket{010}
  - \ket{001}
\end{eqnarray*}
span a complex vector space of dimension $6$ after projection onto
$x_W^\perp$ (the last $3$ vectors become linearly dependent). We conclude
that $\dim G.x_W=2 \cdot 6 = 12$. Similarly, one shows that $\dim
K.x_W=8$. For future reference we denote
\begin{equation}\label{w-vertex}
\mathrm{v}_W=\mu(x_W)=i\{\diag(-\nicefrac{1}{6},\nicefrac{1}{6}),\,
\diag(-\nicefrac{1}{6},\nicefrac{1}{6}),\,\diag(-\nicefrac{1}{6},
\nicefrac{1}{6})\}.
\end{equation}

\item The $G$-orbits through the bi-separable Bell states $x_{B1}=[\frac
    1 {\sqrt 2}(\ket{000}+\ket{011})]=[\ket 0 \otimes \frac 1 {\sqrt 2}
    \left(\ket{00} + \ket{11}\right)]$, $x_{B2}=[\frac 1 {\sqrt
    2}(\ket{000}+\ket{101})]$, and $x_{B3}=([\frac 1 {\sqrt
    2}(\ket{000}+\ket{110})]$: In a similar way as for $x_W$ one can show
    by an explicit computation that $\dim G.x_{Bk}=8$ and $\dim
    K.x_{Bk}=5$.
\item The $G$-orbit of separable states, generated by
    $x_{\SEP}=[\ket{000}]$: By the Kostant--Sternberg theorem, it is the
    unique symplectic orbit \cite{Sawicki11}
and $\dim G.x_{\SEP} = \dim K.x_{\SEP} = 9 - 3 = 6$.
\end{enumerate}

\paragraph{Kirwan polytopes of $G$-orbit closures}

Let us write $X_j := \overline {G \cdot x_j}$ for the closures of these
$G$-orbits. The representatives $x_j$ that we have chosen above satisfy the
following property:

\begin{lema}
  \label{The-distance} We have $\Psi(x_j) = \mu(x_j) = \mathrm{v}_j$, where
  $\mathrm{v}_j$ is defined in (\ref{eq:vertices}, \ref{w-vertex}). Moreover,
  $\Psi(x_j)$ is the closest point to the origin in the Kirwan polytope
  $\Psi(X_j)$.
\end{lema}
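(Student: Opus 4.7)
The plan is to handle the two assertions separately. For the first equality $\Psi(x_j) = \mu(x_j) = \mathrm{v}_j$, I would compute the one-qubit reduced density matrices $\rho_1,\rho_2,\rho_3$ of each representative by partial trace and substitute into (\ref{momentum map three qubits}). The chosen representatives are symmetric enough that every $\rho_k$ is diagonal in the computational basis, so $\mu(x_j) \in \mathfrak{t}^*$ automatically; a $K$-translate (in particular a Weyl reflection exchanging $\ket 0$ and $\ket 1$ on each qubit) then places $\mu(x_j)$ in $\mathfrak{t}^*_+$ and matches $\mathrm{v}_j$ exactly. The non-trivial case is $x_W$, for which each $\rho_k$ has eigenvalues $\frac{1}{3},\frac{2}{3}$.

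For the second assertion---that $\mathrm{v}_j$ is the closest point of $\Psi(X_j)$ to the origin---the key input is the Kirwan--Ness theory of the norm-squared moment map. By \cite{Kirwan84,Ness84}, the negative gradient flow of $\|\mu\|^2/2$ on $M$ preserves each $G$-orbit closure and every trajectory converges to a critical point of $\|\mu\|^2$. If $v_\infty$ denotes the limit starting from $v \in \mathbb{P}(\mathcal H)$, then $\Psi(v_\infty) \in \mathfrak{t}^*_+$ depends only on the $G$-orbit of $v$ and coincides with the unique point of $\Psi(\overline{G.v})$ closest to the origin. It therefore suffices to exhibit each $x_j$ itself as a critical point of $\|\mu\|^2$: the flow based at $x_j$ is then stationary, $v_\infty = x_j$, and $\mathrm{v}_j = \Psi(x_j)$ is the closest point of $\Psi(X_j)$ to the origin.

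The critical-point condition on $\mathbb{P}(\mathcal H)$ unpacks to $A v_j \in \mathbb C v_j$, where $v_j \in \mathcal H$ is a lift of $x_j$ and $A \in \mathfrak k$ is the element dual to $\mu(x_j)$ under the trace pairing; this is immediate from the general identity $d(\|\mu\|^2)_x = 2\omega(\hat A_x,\cdot)$ together with the non-degeneracy of $\omega$. For each representative this is a short verification. For $x_W$, the three actions $((E_{11}-E_{22})\otimes I \otimes I) v_W$ and its cyclic permutations (displayed just before the lemma) sum to $v_W$, and since $\rho_k - \frac{1}{2}I = \frac{1}{6}(E_{11}-E_{22})$ in the chosen basis, one obtains $A v_W = \frac{i}{6}\,v_W$. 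For $x_{\GHZ}$ nothing is needed since $\mu(x_{\GHZ})=0$. For $x_{\SEP}$ and each $x_{Bk}$, the diagonal matrix $\rho_k - \frac{1}{2}I$ acts by a common scalar on every computational-basis vector appearing in $v_j$, so $A v_j$ is again a scalar multiple of $v_j$.

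The main obstacle I anticipate is conceptual rather than computational: one must invoke the Kirwan--Ness strengthening of Kempf--Ness that applies to \emph{unstable} $G$-orbits as well. The classical Kempf--Ness theorem covers only the semistable case (which would suffice for $\GHZ$, where $\mu = 0$), whereas the $W$, $Bk$, and $\SEP$ classes all have $\mu(x_j) \neq 0$ and need the stronger statement from \cite{Kirwan84,Ness84} identifying the flow limit in $\mathfrak t^*_+$ with the optimal destabilizing vector, equivalently the closest-to-origin point of $\Psi(\overline{G.v})$. Once that result is in hand, the verification of the critical-point condition for each $x_j$ is forced by the permutation and conjugation symmetry of the chosen representatives.
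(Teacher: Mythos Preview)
Your proposal is correct and follows essentially the same route as the paper. The paper's own argument is terser but structurally identical: it notes that the first claim is a simple computation, and for the second invokes the general Kirwan theory that the gradient descent of $\lVert\mu\rVert^2$ at $x$ is implemented by the fundamental vector field $\widehat{i\mu(x)}$, then checks that $\widehat{i\mu(x_j)}\big|_{x_j}=0$ for each representative---which is exactly your critical-point verification $A v_j \in \mathbb C v_j$.
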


\noindent The first claim follows from a simple computation. The fact that
$\mathrm{v}_j$ is the closest point to the origin of the Kirwan polytope of
the corresponding $G$-orbit closure follows from the general theory of
\cite{Kirwan82}: the gradient descent with respect to the norm square of the
moment map is at any point $x$ implemented by the vector field generated by
$i \mu(x) \in i \mathfrak k \subseteq \mathfrak g$, and one easily checks
that $\widehat{i \mu(x_j)}\big|_{x_j} = 0$. It will be explained in more
detail in \cite{SOK12,WalterDoranGrossChristandl}.

We will now describe the Kirwan polytopes of the $G$-orbit closures: Since $G
. x_{\GHZ}$ is dense in $\mathbb P(\mathcal H)$, we immediately obtain that
$\Psi(X_{\GHZ})$ is equal to the full Kirwan polytope for three qubits as
described by Fact \ref{higuchi}. On the other hand, since $G . x_{\SEP} = K .
x_{\SEP}$ is a single $K$-orbit, $\Psi(X_{\SEP})$ is a single point, namely
$\Psi(X_{\SEP}) = \{\mathrm{v}_{\SEP}\}$. Similarly, one finds that the
Kirwan polytopes for the bi-separable Bell states $x_{Bk}$ are equal to the
one-dimensional line segments from $\mathrm{v}_{Bk}$ to $\mathrm{v}_{\SEP}$.
Therefore, the only non-trivial task is the computation of the Kirwan
polytope for the $W$-state:

\begin{prop}
  \label{kirwan w}
  The Kirwan polytope $\Psi(X_W)$ is equal to the convex hull of the points
  $\mathrm{v}_{B1}$, $\mathrm{v}_{B2}$, $\mathrm{v}_{B3}$ and $\mathrm{v}_{\SEP}$.
  In particular, it is of maximal dimension.
\end{prop}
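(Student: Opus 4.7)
The plan is to establish the two inclusions separately.

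For $\mathrm{conv}(\mathrm{v}_{\SEP}, \mathrm{v}_{B1}, \mathrm{v}_{B2}, \mathrm{v}_{B3}) \subseteq \Psi(X_W)$, by the convexity of the Kirwan polytopes of $G$-orbit closures (property~(v)) it suffices to place each of the four vertices in $\Psi(X_W)$. I would do this through explicit one-parameter degenerations in $G$. Applying $g_t = \diag(t^{-1},t) \otimes I \otimes I \in G$ to $x_W$, the projective limits as $t \to \infty$ and $t \to 0$ yield, respectively, $[\ket{100}]$ (a state $K$-equivalent to $x_{\SEP}$) and $[\ket{010} + \ket{001}] = [\ket{0} \otimes (\ket{10} + \ket{01})]$ (a state $K$-equivalent to $x_{B1}$). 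Because $X_W$ is closed and $G$-invariant, both limit points lie in $X_W$; since $\Psi$ is $K$-invariant this places $\mathrm{v}_{\SEP}$ and $\mathrm{v}_{B1}$ in $\Psi(X_W)$. The remaining vertices $\mathrm{v}_{B2}, \mathrm{v}_{B3}$ are handled by the same argument with the $g_t$-factor moved to the second or third tensor slot.

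For the reverse inclusion $\Psi(X_W) \subseteq \mathrm{conv}(\mathrm{v}_{\SEP}, \mathrm{v}_{B1}, \mathrm{v}_{B2}, \mathrm{v}_{B3})$, the crucial input is Lemma~\ref{The-distance}: $\mathrm{v}_W$ is the closest point in the convex polytope $\Psi(X_W)$ to the origin. The standard supporting-hyperplane argument then forces $\Psi(X_W)$ into the half-space $\{\alpha : \langle \alpha, \mathrm{v}_W\rangle \geq \|\mathrm{v}_W\|^2\}$. Reading $\mathrm{v}_W = (\nicefrac{1}{6}, \nicefrac{1}{6}, \nicefrac{1}{6})$ from~(\ref{w-vertex}) in the coordinates of~(\ref{positive weyl chamber}), this half-space simplifies to $\{\lambda_1 + \lambda_2 + \lambda_3 \geq \nicefrac{1}{2}\}$.

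Combining this with the trivial inclusion $\Psi(X_W) \subseteq \Psi(\mathbb P(\mathcal H))$, I would then inspect the bipyramidal structure of the full three-qubit Kirwan polytope shown in Figure~\ref{figure:kirwan-three-qubits}: $\mathrm{v}_{\SEP}$ and $\mathrm{v}_{\GHZ}$ are the two apices sharing the equatorial triangle $\{\mathrm{v}_{B1}, \mathrm{v}_{B2}, \mathrm{v}_{B3}\}$. The cutting hyperplane $\lambda_1 + \lambda_2 + \lambda_3 = \nicefrac{1}{2}$ passes exactly through this equator and separates $\mathrm{v}_{\GHZ}$ (at $\sum \lambda_i = 0$) from $\mathrm{v}_{\SEP}$ (at $\sum \lambda_i = \nicefrac{3}{2}$). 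What survives on the correct side is exactly the tetrahedron $\mathrm{conv}(\mathrm{v}_{\SEP}, \mathrm{v}_{B1}, \mathrm{v}_{B2}, \mathrm{v}_{B3})$, which is moreover three-dimensional, giving both the claimed equality and the maximal dimensionality.

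The hard part is the second inclusion: without the sharp identification of $\mathrm{v}_W$ as the exact minimizer of $\|\alpha\|$ on $\Psi(X_W)$ supplied by Lemma~\ref{The-distance}, one would only obtain containment in some half-space with normal parallel to $\mathrm{v}_W$ but of unspecified offset, which would not be strong enough to eliminate $\mathrm{v}_{\GHZ}$ along exactly the equator. Once that lemma is granted, the remaining combinatorial inspection of the bipyramid is routine.
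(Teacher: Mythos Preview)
Your proof is correct and follows essentially the same route as the paper: the supporting-hyperplane argument from Lemma~\ref{The-distance} intersected with the full polytope gives the upper bound, and explicit degenerations inside $X_W$ place the four vertices. The only cosmetic difference is that the paper phrases the degeneration step as the $T^{\mathbb C}$-orbit of $x_W$ producing all states $[c_1\ket{100}+c_2\ket{010}+c_3\ket{001}]$ and then passes to the closure, whereas you single out specific one-parameter subgroups and their projective limits; these are the same computation.
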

\begin{proof}
  Clearly, $\Psi(X_W)$ is a subset of the full Kirwan polytope $\Psi(\mathbb
  P(\mathcal H))$. On the other hand, Lemma \ref{The-distance} and convexity
  imply that it is also contained in the half-space through $\mathrm{v}_W$
  with normal vector $\mathrm{v}_W$. Since the intersection of this
  half-space with $\Psi(\mathbb P(\mathcal H))$ is precisely equal to the
  convex hull of the points $\mathrm{v}_{B1}$, $\mathrm{v}_{B2}$,
  $\mathrm{v}_{B3}$ and $\mathrm{v}_{\SEP}$, we only need to show that these
  points are contained in the Kirwan polytope.

  We will in fact show that the corresponding preimages $x_j$ are contained
  in the orbit closure $X_W$. For this, we observe that the action of the
  complexification $T^{\mathbb{C}} \subset G$ of the maximal torus $T\subset
  K$ applied to $x_W$ gives rise to all states of the form
  \begin{equation}
    \label{some W class states}
    [c_1 \ket{100} + c_2 \ket{010} + c_3 \ket{001}]
    \quad
    (c_j \neq 0)
  \end{equation}
  In particular, $x_j \subseteq X_W = \overline{G \cdot x_W}$ for
  $j=B1, B2, B3, \SEP$, and the claim follows.
\end{proof}

\subsection{Sphericality of the $W$ SLOCC class}

In this subsection, we show that $X_W = \overline{G.x_W}$ is a spherical
variety. It then follows by Brion's theorem that every quantum state in $X_W$
is, up to local unitaries, characterized by the collection of the spectra of
its one-qubit reduced density matrices. In other words, $\Psi$ separates the
$K$-orbits in $X_W$.

We start by clarifying the geometric structure of $X_W$. Note that since
$G=K^{\mathbb{C}}$ is a complex reductive group, any orbit closure $X_j$ is a
$G$-invariant irreducible subvariety of $\mathbb{P}(\mathcal{H})$; however,
these varieties will in general be singular. This is in fact already the case
for $X_W$. Indeed, it is known from \cite{klyachko08} that
\begin{equation*}
  X_W = \mathbb P(\mathcal H) \setminus (G . x_{\GHZ}) =
  \{ [v] \in \mathbb P(\mathcal H) : \mathrm{Det}(v) = 0 \},
\end{equation*}
where $\mathrm{Det}$ is the Cayley hyperdeterminant (the basic invariant for
the $G$-representation $\mathcal H$), and one readily verifies that the
tangent space at $x_{\SEP} \in X_W$ has complex dimension $7 > 6 =
\dim_{\mathbb{C}} X_W$.

Let us denote by $\mu_W:X_W \rightarrow \mathfrak k^*$ the restriction of
$\mu:\mathbb{P}(\mathcal{H})\rightarrow\mathfrak{k}^{\ast}$ to $X_W$. Our aim
now is to prove that $\mu_W$ separates the $K$-orbits in $X_W$. To this end
we use the following theorem by Brion (see \cite{Brion87}, and also
\cite{huckleberry90}):

\begin{fact}
  \label{Brion}
  Let $G = K^{\mathbb C}$ be a connected complex reductive group,
  $\mathcal H$ a rational $G$-representation,
  and $X$ a $G$-invariant irreducible subvariety of $\mathbb{P}(\mathcal{H})$
  (cf.~Subsection \ref{subs:convexity}).
  Then the following are equivalent:
  \begin{enumerate}
  \item $X$ is spherical, i.e.\ the (every) Borel subgroup $B$ has a
      Zariski-open orbit in $X$.
  \item For every $x \in X$ the fiber $\mu^{-1}(\mu(x))$ is contained in
    a single $K$-orbit, $K . x$.
  \end{enumerate}
\end{fact}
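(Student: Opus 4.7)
The plan is to prove Brion's theorem by relating sphericality of $X$ to the generic dimension of the symplectic quotients $M_\alpha$, via the notion of \emph{complexity} $c(X) := \dim_{\mathbb{C}} X - \dim_{\mathbb{C}}(B.x)$ for $x$ in a generic $B$-orbit. By definition, $X$ is spherical iff $c(X) = 0$, so proving (i) $\Leftrightarrow$ (ii) reduces to showing that $\dim_{\mathbb{R}} M_\alpha = 2 c(X)$ on the regular locus of the Kirwan polytope and then propagating the conclusion to non-generic fibers.

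The key dimension count goes as follows. By Fact~\ref{interior}, any $\alpha$ in the relative interior of $\Psi(X)$ has a preimage $x \in \mu^{-1}(\alpha) \cap X_{\max}$. At such an $x$ one has $\mathfrak{k}_x = 0$, so Fact~\ref{range_mu} yields that $d\mu|_x$ is surjective and $\mu^{-1}(\alpha)$ has real codimension $\dim_{\mathbb{R}} K$ near $x$. Since $K_\alpha = T$ acts with discrete stabilizers on this regular locus, the top stratum of $M_\alpha$ has real dimension $\dim_{\mathbb{R}} X - \dim_{\mathbb{R}} K - \dim_{\mathbb{R}} T$. Using the identity $\dim_{\mathbb{R}} K + \dim_{\mathbb{R}} T = 2 \dim_{\mathbb{C}} B$, valid for any complex reductive group and its Borel, this rewrites as $2(\dim_{\mathbb{C}} X - \dim_{\mathbb{C}} B)$; combining with the Kempf--Ness correspondence, which via the shifting trick identifies $M_\alpha$ with a GIT quotient whose generic-fiber dimension is controlled by $c(X)$, yields $\dim_{\mathbb{R}} M_\alpha^{\mathrm{top}} = 2 c(X)$.

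From this identity both directions follow immediately on the open set of generic $\alpha$: sphericality gives $c(X) = 0$, whence $M_\alpha$ is zero-dimensional and by connectedness of $\mu$-fibers (convexity property (ii)) must be a single point, meaning $\mu^{-1}(\alpha)$ is a single $K$-orbit; conversely, if every fiber is a single $K$-orbit then $\dim M_\alpha = 0$ forces $c(X) = 0$. To extend (i) $\Rightarrow$ (ii) from generic $\alpha$ to all of $\Psi(X)$, I would use upper semicontinuity of the orbit-dimension function $x \mapsto \dim(K.x)$ together with the connectedness of each stratified symplectic space $M_\alpha$, arguing that a hypothetical additional $K$-orbit in a boundary fiber would have strictly smaller dimension and would disconnect $M_\alpha$ from the limit of generic single-orbit fibers, contradicting property (ii). The main obstacle is precisely this generic-to-boundary extension; within the dimension count the other delicate point is identifying the GIT-quotient dimension with $c(X)$ rather than with the cruder quantity $\dim_{\mathbb{C}} X - \dim_{\mathbb{C}} B$, which differ by the generic $B$-stabilizer dimension and require a separate argument via the Luna--Vust structure of spherical embeddings.
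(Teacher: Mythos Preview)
The paper does not prove this statement at all: it is stated as a \emph{Fact} and attributed to Brion \cite{Brion87} (see also \cite{huckleberry90}), in keeping with the paper's convention of proving only new results and citing the rest. So there is no ``paper's own proof'' to compare against; your proposal is an attempt to reconstruct the theorem from scratch.

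As such, your sketch has the right overall shape---relating the generic dimension of $M_\alpha$ to the complexity $c(X)$---but several steps are genuinely incomplete, and you flag some of them yourself. First, the dimension count $\dim_{\mathbb R} M_\alpha = \dim_{\mathbb R} X - \dim_{\mathbb R} K - \dim_{\mathbb R} T$ presupposes both that the generic $K$-stabilizer on $X$ is discrete and that $K_\alpha = T$ for generic $\alpha \in \Psi(X)$. Neither holds in general: the paper establishes these only for $X = \mathbb P(\mathcal H)$ in the three-qubit case, via full-dimensionality of the polytope and Sard's theorem, and for a proper subvariety $X$ the polytope $\Psi(X)$ may lie entirely in a wall of the Weyl chamber, or the principal isotropy may be positive-dimensional. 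Second, as you note, the arithmetic gives $2(\dim_{\mathbb C} X - \dim_{\mathbb C} B)$, not $2c(X)$; the discrepancy is the generic $B$-stabilizer, and invoking ``Luna--Vust structure'' does not by itself close this gap---one really needs the multiplicity-free characterization of sphericality (every irreducible appears at most once in the homogeneous coordinate ring) to make the link precise. Third, your generic-to-boundary argument is not valid: the presence of several $K$-orbits in a singular fiber does not contradict connectedness of $M_\alpha$, since a connected stratified space can perfectly well consist of several strata of different dimensions. Brion's actual proof proceeds via the multiplicity-free / coisotropic characterization and does not rely on such a limiting argument.
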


We will now show that indeed:

\begin{prop}
  \label{borel}
  The $G$-variety $X_W$ is spherical.
\end{prop}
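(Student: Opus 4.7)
By Fact~\ref{Brion}, it is enough to prove that the Borel subgroup $B \subset G$ has a Zariski-open orbit in $X_W$. Since $\dim_{\mathbb{C}} B = 6 = \dim_{\mathbb{C}} X_W$, the plan is to find a single point $[v] \in X_W$ whose $B$-stabilizer is zero-dimensional; its $B$-orbit will then automatically have the maximal complex dimension $6$ and therefore be open.

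The candidate $[x_W]$ itself will not do. Reading off the three-dimensional kernel from the computation of $\mathfrak{g}\cdot x_W$ carried out in item~(ii) of the SLOCC classification above, one sees that $\mathfrak{g}_{[x_W]}$ is spanned by $E_{12}^{(1)} - E_{12}^{(2)}$, $E_{12}^{(1)} - E_{12}^{(3)}$, and $H^{(1)} + H^{(2)} + H^{(3)}$, where $H := E_{11} - E_{22}$ and superscripts indicate the tensor factor acted upon. All three generators lie in $\mathfrak{b}$, so the $B$-orbit of $[x_W]$ has only complex dimension $3$. I would therefore test the translated point $[v] := n^-\cdot[x_W]$, where $n^- = n^-_1 n^-_2 n^-_3$ is a generic element of the opposite-unipotent subgroup $N^- \subset G$, parametrized by $(t_1, t_2, t_3) \in \mathbb{C}^3$. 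The Lie algebra of the $B$-stabilizer of $[v]$ equals $\mathfrak{b} \cap \Ad_{n^-}(\mathfrak{g}_{[x_W]})$.

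The key technical input would be the standard $\mathfrak{sl}_2$-identities $\Ad_{n^-_i}\,E_{12}^{(i)} = E_{12}^{(i)} - t_i H^{(i)} - t_i^2 E_{21}^{(i)}$ and $\Ad_{n^-_i}\,H^{(i)} = H^{(i)} + 2 t_i E_{21}^{(i)}$, which show that each of the three transported generators acquires a component in the opposite root spaces $E_{21}^{(i)}$, all lying outside $\mathfrak{b}$. Projecting onto $\Span_{\mathbb{C}}\{E_{21}^{(1)}, E_{21}^{(2)}, E_{21}^{(3)}\}$ yields a $3 \times 3$ matrix of coefficients in $(t_1, t_2, t_3)$; the main obstacle will be to verify that its determinant is a nonzero polynomial, so that the matrix is nonsingular for generic parameters. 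Once this genericity is in hand, $\mathfrak{b} \cap \Ad_{n^-}(\mathfrak{g}_{[x_W]}) = 0$ for generic $n^-$, the $B$-orbit of $[v] \in G\cdot x_W \subset X_W$ has maximal complex dimension $6 = \dim_{\mathbb{C}} X_W$, and hence $X_W$ is spherical by Fact~\ref{Brion}.
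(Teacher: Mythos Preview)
Your argument is correct and follows the same overall strategy as the paper: translate $x_W$ by an element of the opposite unipotent subgroup and verify, via a Lie-algebra computation, that the Borel orbit through the translated point has complex dimension $6 = \dim_{\mathbb C} X_W$. The executions differ slightly. The paper works with the lower-triangular Borel, picks the single explicit point $x = (\left(\begin{smallmatrix}1&1\\0&1\end{smallmatrix}\right)\otimes I\otimes I)\cdot x_W$, and checks directly that the six tangent vectors generated by $\mathfrak b$ at $x$ are linearly independent modulo $\mathbb C x$. You instead first identify the full $G$-stabilizer $\mathfrak g_{[x_W]}$, conjugate it by a \emph{generic} opposite-unipotent element, and reduce the triviality of $\mathfrak b\cap\Ad_{n^-}\mathfrak g_{[x_W]}$ to the nonvanishing of a $3\times3$ determinant (which indeed equals $2\,t_1t_2t_3(t_1t_2+t_1t_3+t_2t_3)\not\equiv 0$). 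Your route is a bit more structural and makes the generalization to the $L$-qubit $W$-class transparent; the paper's single explicit point is more concrete and sidesteps the genericity argument altogether.
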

\begin{proof}
  Consider the Borel subgroup $B$, which consists of the lower-triangular
  matrices in $G$.
Its Lie algebra $\mathfrak b$ is equal to $\mathfrak t^{\mathbb C} \oplus
\mathfrak n$, where
  \begin{equation*}
    \mathfrak{n} = \Span_{\mathbb{C}} \{ E_{21} \otimes I\otimes I,\,
I\otimes E_{21}\otimes I,\, I\otimes I\otimes E_{21}\}.
  \end{equation*}
In order to show that $X_W$ is spherical, we have to show that there exists a
Zariski-open orbit $B.x$. Since $B.x$ is Zariski-open in its closure, it
suffices to show that the closure of $B.x$ is equal to $X_W$. Now, since the
closure of $B.x$ is a closed subvariety, it is either equal to $X_W$ or of
lower dimension. Therefore, it suffices to show that the dimension of $B.x$
is equal to the dimension of $X_W$. Since $B.x$ is a smooth variety, we can
compute its dimension by computing the dimension of the tangent spaces at any
point. Hence in order to show that $X_W$ is spherical, it suffices to show
that $\dim_{\mathbb C} T_x (B.x) = \dim_{\mathbb C} X_W = 6$ for any single
point $x$ in the $G$-orbit of $x_W$. We will consider the state
  \begin{equation*}
    x
    = [ \frac 1 {\sqrt 4} \left( \ket{100}+\ket{010}+\ket{001}
    +\ket{000} \right) ]
    = ( \left(\begin{array}{cc} 1 & 1 \\ 0 & 1 \end{array}\right) \otimes I \otimes I ) \cdot x_W.
\end{equation*}
Indeed, one can easily verify that the tangent vectors generated by
$\mathfrak b$, i.e.\ the projection of
\begin{eqnarray*}
(E_{21} \otimes I \otimes I) x  \propto  \ket{110} + \ket{101} + \ket{100}, \\
(I \otimes E_{21} \otimes I) x  \propto  \ket{110} + \ket{011} + \ket{010}, \\
(I \otimes I \otimes E_{21}) x  \propto  \ket{101} + \ket{011} + \ket{001}, \\
((E_{11} - E_{22}) \otimes I \otimes I) x  \propto  -\ket{100} + \ket{010} + \ket{001} + \ket{000}, \\
(I \otimes (E_{11} - E_{22}) \otimes I) x  \propto  +\ket{100} - \ket{010} + \ket{001} + \ket{000}, \\
(I \otimes I \otimes (E_{11} - E_{22})) x  \propto  +\ket{100} + \ket{010} - \ket{001} + \ket{000},
\end{eqnarray*}
onto the orthogonal complement $x^\perp$, span a complex vector space of
dimension six.
\end{proof}

\noindent The following is now a direct consequence of
Proposition~\ref{borel} and Fact~\ref{Brion}:

\begin{thm}\label{W sep}
  The momentum map $\mu_W$ separates all $K$-orbits inside $X_W$. That is,
  every quantum state in the $W$ SLOCC class is (up to local unitaries)
  uniquely determined by the spectra of its reduced density matrices.
\end{thm}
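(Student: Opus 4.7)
My plan is to derive Theorem~\ref{W sep} essentially by combining Proposition~\ref{borel} with Fact~\ref{Brion}, with only a small bookkeeping step to pass from statements about the full momentum map $\mu$ to statements about the ordered-spectra map $\Psi$.

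First I would unwind the definition of $\Psi$ restricted to $X_W$. Suppose $x_1, x_2 \in X_W$ satisfy $\Psi(x_1) = \Psi(x_2)$, i.e.\ the ordered spectra of the one-qubit reduced density matrices of $x_1$ and $x_2$ coincide. By construction of $\Psi$, this is equivalent to saying that $\mu(x_1)$ and $\mu(x_2)$ meet the positive Weyl chamber $\mathfrak{t}_+^*$ at the same point, hence lie in a common coadjoint orbit of $K$. Using the equivariance property (iii) from Definition~\ref{moment_map_def}, I can then pick $k \in K$ with $\mu(k . x_1) = \mu(x_2)$. The point $k . x_1$ still lies in $X_W$, since $X_W$ is the closure of a $G$-orbit and in particular $K$-invariant. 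Hence the problem is reduced to showing that two points in $X_W$ with identical $\mu$-image lie in the same $K$-orbit.

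At this point the real work has already been done: Proposition~\ref{borel} establishes that $X_W$ is spherical, and Fact~\ref{Brion} asserts the equivalence of sphericality with the statement that for every $x \in X_W$ the fiber $\mu_W^{-1}(\mu(x))$ is contained in the single $K$-orbit $K . x$. Applying this to $x = x_2$ shows that $k . x_1 \in K . x_2$, and therefore $K . x_1 = K . x_2$, which is exactly local unitary equivalence.

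I expect there to be no genuine obstacle here, since the content is already packaged in the preceding results; the only thing to be careful about is the distinction between fibers of $\mu$ and fibers of $\Psi$, which is why I would open with the reduction step above. A possible subtlety worth explicitly noting is that Fact~\ref{Brion} is applied to the subvariety $X_W$ itself (with the inherited Fubini--Study structure and the restricted momentum map $\mu_W$), rather than to $\mathbb{P}(\mathcal{H})$; this is legitimate because $X_W$ is a $G$-invariant irreducible subvariety, which is precisely the hypothesis of Fact~\ref{Brion}, even though it is singular at $x_{\SEP}$.
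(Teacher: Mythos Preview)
Your proposal is correct and matches the paper's approach exactly: the paper states Theorem~\ref{W sep} as a direct consequence of Proposition~\ref{borel} and Fact~\ref{Brion}, without further argument. Your additional bookkeeping step reducing $\Psi$-fibers to $\mu$-fibers via equivariance is a valid (and clarifying) unpacking of the second sentence of the theorem, but introduces nothing substantively new.
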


\begin{rem}
  In fact, any $W$-type state of $L$ qubits, i.e.\ any state which is in the
  SLOCC class of
\begin{equation*}
 x_W = \left[ \frac{1}{\sqrt{N}} \left( \ket{10\ldots 0} + \ket{010\ldots 0}
+ \ldots + \ket{0\ldots 01} \right) \right] \in
\mathbb{P}((\mathbb{C}^2)^{\otimes L}),
\end{equation*}
is up to local unitary equivalence determined by its single-particle density
matrices. This can be shown \emph{mutatis mutandis} as for the case of three
qubits (cf.~\cite{Yu12} for an alternative proof based on linear-algebraic
computations, which appeared after the initial version of this paper). By
Brion's theorem, it is enough to show that the $G$-variety $\overline{G.x_W}$
is spherical. To this end, we choose the state
  \begin{equation*}
    x= \left(\left(\begin{array}{cc}
    1 & 1\\
    0 & 1
    \end{array}\right)\otimes I\ldots\otimes I\right)x_W=
\left[ \frac{1}{\sqrt{N+1}}\left(\ket{0\ldots0}+\sqrt{N}x_W\right) \right]
\in G . x_W
\end{equation*}
in complete analogy to the proof of Proposition~\ref{borel}. It is again a
matter of simple calculations to show that the real dimensions of the orbits
$G.x$ and $B.x$, where $B$ is a Borel subgroup defined as in the case of
three qubits, are equal (to $4L$). Hence $\overline{G.x_W}$ is a spherical
variety.
\end{rem}

\begin{rem}
{The fact that $W$-states are, up to local unitaries, uniquely determined by
the spectra of their reduced one-particle density matrices can be, in
principle, also determined in purely algebraic manner from the explicit
knowledge of invariants separating different orbits of the group of local
unitary transformations.
Indeed, if we are able to show that for all states of the $W$-class all
invariants depend only on the spectra of one-particle density matrices then
the conclusion follows. For the three-qubit case this can be, in principle,
done using results of \cite{Acin00,Acin01}, where explicit forms of the
invariants are given. For $W$-states one can, after some calculations, prove
that they depend only on one-particle spectra. For more than three qubits,
however, this approach quickly becomes intractable: one first needs to
explicitly compute all relevant invariants (which has to be done separately
for each number of qubits) and then prove that for states of the $W$-type the
result depends only on the local spectra. In contrast, the geometric approach
presented above leads immediately to the desired result for an arbitrary
number of qubits.}
\end{rem}


Combining Proposition~\ref{borel} with Theorem~\ref{interior-char} we
conclude that every manifold $\mu^{-1}(\Omega_{\alpha})\cap M_{\max}$, where $\alpha$ is
in the interior of the polytope $\Psi(X_W)$ contains a unique orbit $K.\tilde{x}$ with $\tilde{x}\in G.x_W$. We have
illustrated the situation in Figure~\ref{figure:slices}.

\begin{figure}[h]
~~~~~~~~~~~~~~~~~~~~~~~~~~~~~\includegraphics[scale=0.7]{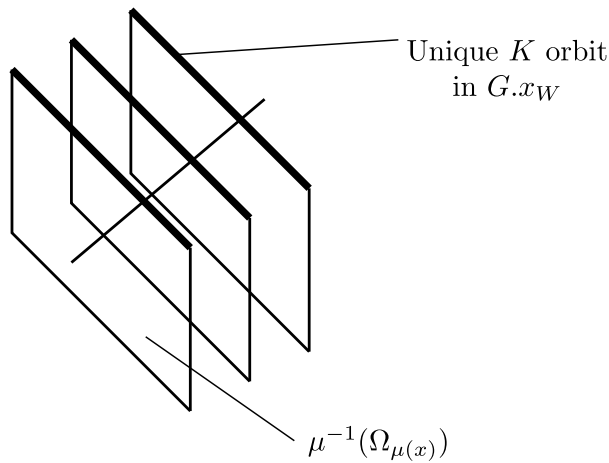}

\caption{The structure of $\mu^{-1}(\Omega_{\alpha})$, where $\alpha$ is
in the interior of the polytope $\Psi(X_W)$.}
\label{figure:slices}
\end{figure}

\subsection{The boundary of the Kirwan polytope}
\label{sec:boundary}

In our analysis at the beginning of this section we did only consider quantum
states that are mapped into the interior of the Kirwan polytope. We will now
prove that any pure quantum state of three qubits that is mapped to the
boundary of the Kirwan polytope $\Psi(\mathbb P(\mathcal H))$ is, up to local
unitaries, uniquely determined by the spectra of its one-body reduced density
matrices.

Let us therefore consider a quantum state $x \in \mathbb P(\mathcal H)$ that
is mapped to the boundary of the Kirwan polytope and write
\begin{equation}
\mu(x) = (\diag(-i\lambda_1,i\lambda_1), \diag(-i\lambda_2,i\lambda_2),
\diag(-i\lambda_3,i\lambda_3)),
\end{equation}
according to (\ref{positive weyl chamber}). Then the minimal eigenvalue $p_j$
of the reduced density matrix $\rho_j$ is given by $p_j = \frac 1 2 -
\lambda_j$. We distinguish two cases:

\paragraph{Non-degenerate case}
We shall first treat the case where the reduced density matrices of $x$ all
have non-degenerate eigenvalue spectrum; that is, $\mu(x)$ is contained in
the interior of the positive Weyl chamber. We may assume by symmetry that
$\mu(x)$ is contained in the face corresponding to the equation $p_1 = p_2 +
p_3$ (cf.~Fact \ref{higuchi}), i.e.
\begin{equation}
  \label{interior face}
  - \lambda_1 + \lambda_2 + \lambda_3 = \frac 1 2.
\end{equation}
That is, $\mu(x)$ is orthogonal to (annihilated by) the Lie algebra element
$i \xi \in \mathfrak k$, where
\begin{equation*}
  \xi = \frac 1 2 \left( \diag(1, -1), \diag(1, -1), \diag(1, -1) \right),
\end{equation*}
and Fact \ref{range_mu} implies that $\mathbb C \xi$ is contained in the Lie
algebra of the isotropy subgroup of $x$. It follows that $x = [v]$ for some
eigenvector $v \in \mathcal H$ of $\xi$ with eigenvalue $\nicefrac 1 2$. By
diagonalizing the action of $\xi$ on $\mathcal H$ we find that
\begin{equation*}
  x = [c_1 \ket{000} + c_2 \ket{110} + c_3 \ket{101}]
\end{equation*}
for some constants $c_j$. Any such state is contained in $X_W$, the closure
of the SLOCC class of the $W$-state, as can be seen by applying $\ket 0
\leftrightarrow \ket 1$ to the first subsystem and comparing with (\ref{some
W class states}). We can therefore use Theorem \ref{W sep} to conclude that
$x$ is determined up to local unitaries by the eigenvalues of its reduced
density matrices.

\paragraph{Degenerate case}
We will now treat the case where at least one of the reduced density matrices
is maximally mixed. Without loss of generality, we may assume that $\mu(x)$
is contained in the face of the Kirwan polytope defined by $p_1 = \frac 1 2$,
i.e.\ $\lambda_1 = 0$. Note that we \emph{cannot} apply the same reasoning as
above, since the faces $\lambda_j = 0$ arise from intersecting $\mu(\mathbb
P(\mathcal H))$ with the positive Weyl chamber and not from the geometry of
the momentum map. By the Schmidt decomposition, and up a local unitary on the
first subsystem,
\begin{equation*}
  x = [\frac 1 {\sqrt 2} \left( \ket 0 \otimes \ket \phi +
\ket 1 \otimes \ket \psi \right)]
\end{equation*}
for orthogonal vectors $\langle \phi | \psi \rangle = 0$. The two-qubit
reduced density matrix $\rho_{23}$ of any such state is a normalized
projector onto the two-dimensional subspace of $\mathbb C^2 \otimes \mathbb
C^2$ with basis vectors $\ket \phi$ and $\ket \psi$; conversely, any other
choice of orthonormal basis gives rise to a local unitarily equivalent state.

Let us denote by $\mathbb G(2,4)$ the Grassmannian consisting of
two-dimensional subspaces $\mathcal K \subseteq \mathbb C^2 \otimes \mathbb
C^2$. Similarly to the case of the projective space, we can consider the
tensor product action of $G' = SL(2) \times SL(2)$ and of its maximal compact
subgroup $K' = SU(2) \times SU(2)$. A momentum map for the $K'$-action is
given by
\begin{equation*}
  \mu' : \mathbb G(2, 4) \ni \mathcal K \mapsto i \left(\rho_2 -
  \frac 1 2 I, \rho_3 - \frac 1 2 I\right) \in \mathfrak{k}'^\ast, \quad
  \mathfrak{k}^\prime=\mathfrak{su}(2)\oplus\mathfrak{su}(2),
\end{equation*}
where $\rho_2$ and $\rho_3$ denote the reduced density matrices of $\rho_{23}
= \frac 1 2 P_{\mathcal K}$, the normalized projector onto the subspace
$\mathcal K$. In view of Fact \ref{Brion}, it suffices to establish
sphericality of this Grassmannian with respect to the action of $G'$:

\begin{prop}
  The $G'$-variety $\mathbb G(2,4)$ is spherical.
\end{prop}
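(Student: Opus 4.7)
The plan is to invoke Brion's theorem (Fact~\ref{Brion}) in direct analogy with the proof of Proposition~\ref{borel}: it suffices to exhibit a single two-dimensional subspace $\mathcal{K}_0 \in \mathbb{G}(2,4)$ whose orbit under a Borel subgroup $B' \subset G' = SL(2)\times SL(2)$ is Zariski-open. I take $B' = B \times B$ with $B \subset SL(2)$ the subgroup of lower-triangular matrices; its complex dimension is $4 = \dim_{\mathbb{C}}\mathbb{G}(2,4)$. Hence it is enough to find a point $\mathcal{K}_0$ at which the derived action of the Borel Lie algebra $\mathfrak{b}' = \Span_{\mathbb{C}}\{(E_{11}-E_{22})\otimes I,\, I\otimes(E_{11}-E_{22}),\, E_{21}\otimes I,\, I\otimes E_{21}\}$ surjects onto the tangent space $T_{\mathcal{K}_0}\mathbb{G}(2,4) \cong \mathrm{Hom}(\mathcal{K}_0,\, \mathbb{C}^4/\mathcal{K}_0)$, which is also four-dimensional.

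A natural candidate is the ``symmetric'' plane $\mathcal{K}_0 = \Span_{\mathbb{C}}\{\ket{00}+\ket{11},\, \ket{01}+\ket{10}\}$; its orthogonal complement $\Span_{\mathbb{C}}\{\ket{00}-\ket{11},\, \ket{01}-\ket{10}\}$ gives a convenient model for the quotient $\mathbb{C}^4/\mathcal{K}_0$. A short direct computation with these ordered bases shows that the two Cartan generators $(E_{11}-E_{22})\otimes I$ and $I\otimes(E_{11}-E_{22})$ map $\mathcal{K}_0$ into its complement via two linearly independent \emph{diagonal} elements of $\mathrm{Hom}(\mathcal{K}_0,\, \mathbb{C}^4/\mathcal{K}_0)$, whereas the two nilpotent generators $E_{21}\otimes I$ and $I\otimes E_{21}$ produce two linearly independent \emph{off-diagonal} elements. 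Together these four tangent vectors span the entire tangent space.

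Consequently $B'.\mathcal{K}_0$ attains the maximal possible complex dimension $4$. Since $\mathbb{G}(2,4)$ is irreducible and the orbit is a constructible subset of that dimension, $B'.\mathcal{K}_0$ must be Zariski-open, and Fact~\ref{Brion} yields sphericality of $\mathbb{G}(2,4)$ as a $G'$-variety. The only genuinely subtle point is the choice of $\mathcal{K}_0$: highly symmetric candidates such as $\Span\{\ket{00},\, \ket{11}\}$ are fixed by the whole maximal torus, and several other natural choices produce linearly dependent tangent vectors. Once a suitable $\mathcal{K}_0$ has been identified, the verification reduces to a routine linear-algebra calculation entirely parallel to the one carried out in the proof of Proposition~\ref{borel}.
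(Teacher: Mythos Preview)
Your argument is correct and follows essentially the same approach as the paper: in fact your chosen subspace $\mathcal{K}_0 = \Span\{\ket{00}+\ket{11},\,\ket{01}+\ket{10}\}$ coincides with the paper's $\Span\{\ket{++},\ket{--}\}$, and both proofs verify that the four Borel generators act with linearly independent tangent vectors there. The only cosmetic difference is that you compute directly in $\mathrm{Hom}(\mathcal{K}_0,\mathbb{C}^4/\mathcal{K}_0)$ while the paper passes through the Pl\"ucker embedding; also note that the reduction to finding an open Borel orbit is simply the \emph{definition} of sphericality, not an application of Fact~\ref{Brion}.
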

\begin{proof}
  Let $B'$ denote the Borel subgroup consisting of lower triangular matrices
  in $G'$. As in the proof of Proposition \ref{borel}, we will show that
  there exists a point $\mathcal K \in \mathbb G(2,4)$ at which
  $\dim_{\mathbb C} T_\mathcal K (B' . \mathcal K) = \dim_{\mathbb C} \mathbb
  G(2,4) = 4$ (noting that $\mathbb G(2,4)$ is smooth). It will be convenient
  to work with coordinates. Let us therefore consider the Pl\"ucker embedding
  \begin{equation*}
    \mathbb G(2, 4) \rightarrow \mathbb P(\bigwedge^2 \mathbb C^4),
    \quad
    \Span_{\mathbb C} \{ \ket \phi, \ket \psi \} \mapsto [\ket \phi \wedge \ket \psi].
  \end{equation*}
  The image of the subspace $\mathcal K = \Span_{\mathbb C} \{ \ket{++}, \ket{--} \}$,
  where $\ket{\pm} = \frac 1 {\sqrt 2} ( \ket 0 \pm \ket 1 )$, is
  \begin{equation*}
    x = [\ket{++} \wedge \ket{--}] = [\ket{00} \wedge \ket{01} +
\ket{00} \wedge \ket{10} - \ket{01} \wedge \ket{11} -
\ket{10} \wedge \ket{11}].
  \end{equation*}
It follows that the tangent space $T_{\mathcal K} (B' . \mathcal K)$ is
spanned by the vectors
  \begin{eqnarray*}
    (E_{21} \otimes I) x \propto \ket{00} \wedge \ket{11} - \ket{01} \wedge \ket{10}, \\
    (I \otimes E_{21}) x \propto \ket{00} \wedge \ket{11} + \ket{01} \wedge \ket{10}, \\
    ((E_{11} - E_{22}) \otimes I) x \propto \ket{00} \wedge \ket{01} + \ket{10} \wedge \ket{11}, \\
    (I \otimes (E_{11} - E_{22})) x \propto \ket{00} \wedge \ket{10} + \ket{01} \wedge \ket{11},
  \end{eqnarray*}
which are orthogonal to $x$ and linearly independent over $\mathbb C$. We
conclude that the tangent space is of complex dimension four.
\end{proof}

\noindent
In summary, we have proved the following result:

\begin{thm}
  Let $x$ be a pure quantum state of three qubits such that $\mu(x)$ is
  contained in the boundary of the Kirwan polytope $\Psi(\mathbb P(\mathcal
  H))$ (i.e.\ its eigenvalues satisfy at least one of the inequalities in Fact
  \ref{higuchi} with equality). Then $x$ is up to local unitaries uniquely
  determined by $\mu(x)$, i.e.\ by the spectra of its one-body reduced density
  matrices.
\end{thm}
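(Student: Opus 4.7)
The plan is to split into two cases depending on how $\mu(x)$ sits on the boundary of the Kirwan polytope: either all three reduced spectra are non-degenerate (so $\mu(x)$ lies in the interior of the positive Weyl chamber and saturates one of the polygonal inequalities from Fact \ref{higuchi}), or at least one reduced density matrix is maximally mixed (so one of the $\lambda_j$ vanishes). By the $S_3$-symmetry permuting the qubits we may fix which face we are on in each case.

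For the non-degenerate case, I would exploit that on a face cut out by a polygonal inequality $-\lambda_1 + \lambda_2 + \lambda_3 = \tfrac12$ the linear functional $\mu_\xi$, with $\xi = \tfrac12(\diag(1,-1),\diag(1,-1),\diag(1,-1))$, is maximised at $x$. Hence $d\mu_\xi|_x = 0$, so by Fact \ref{kernel_mu} the fundamental vector field $\hat\xi$ vanishes at $x$, i.e.\ $\xi \in \mathfrak k_x$ (this is the content of Fact \ref{range_mu} in the dual formulation). Diagonalising the action of $\xi$ on $\mathcal H$, the states with $\xi$-eigenvalue $\tfrac12$ are spanned by $\ket{000},\ket{110},\ket{101}$, so $x = [c_1\ket{000} + c_2\ket{110} + c_3\ket{101}]$ for some constants. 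After the local unitary $\ket 0 \leftrightarrow \ket 1$ on the first qubit, this takes the form (\ref{some W class states}), placing $x$ inside $X_W$. Theorem \ref{W sep} then finishes this case.

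For the degenerate case, assume $\lambda_1 = 0$ so that $\rho_1 = \tfrac12 I$. By the Schmidt decomposition and a local unitary on the first qubit, $x = [\tfrac1{\sqrt 2}(\ket 0 \otimes \ket\phi + \ket 1 \otimes \ket\psi)]$ with $\langle\phi|\psi\rangle = 0$. The two-qubit reduced density $\rho_{23} = \tfrac12 P_{\mathcal K}$ is the normalised projector onto $\mathcal K = \Span_{\mathbb C}\{\ket\phi,\ket\psi\}$, and conversely $\mathcal K$ together with the choice of orthonormal basis $(\ket\phi,\ket\psi)$ determines $x$; any two bases of $\mathcal K$ give rise to states related by a local unitary on the second factor (subsystems 2 and 3). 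This identifies the problem with showing that the $K' = SU(2)\times SU(2)$-momentum map $\mu': \mathbb G(2,4) \to \mathfrak{k}'^*$, sending $\mathcal K \mapsto i(\rho_2 - \tfrac12 I, \rho_3 - \tfrac12 I)$, separates $K'$-orbits on the Grassmannian.

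The main obstacle is establishing this last statement, which by Brion's theorem (Fact \ref{Brion}) reduces to sphericality of $\mathbb G(2,4)$ under $G' = SL(2)\times SL(2)$. Since $\mathbb G(2,4)$ is smooth, it suffices to exhibit a point where the Borel-tangent space has complex dimension $\dim_{\mathbb C}\mathbb G(2,4) = 4$. I would work in Pl\"ucker coordinates, embedding $\mathbb G(2,4) \hookrightarrow \mathbb P(\bigwedge^2 \mathbb C^4)$ via $\Span\{\ket\phi,\ket\psi\} \mapsto [\ket\phi \wedge \ket\psi]$, and pick the symmetric test point $\mathcal K = \Span\{\ket{++},\ket{--}\}$, where $\ket\pm = \tfrac1{\sqrt 2}(\ket 0 \pm \ket 1)$. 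The action of the four generators of $\mathfrak b' = \mathfrak{t}'^{\mathbb C}\oplus\mathfrak n'$ on this point should produce four vectors that are manifestly linearly independent modulo the line spanned by $x$. This yields a Zariski-open Borel orbit and hence sphericality, completing the degenerate case and the theorem.
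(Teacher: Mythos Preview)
Your proposal is correct and follows essentially the same two-case split as the paper: in the non-degenerate case you use the normal $\xi$ to the polygonal face to force $i\xi\in\mathfrak k_x$, read off the eigenspace form $[c_1\ket{000}+c_2\ket{110}+c_3\ket{101}]$, and invoke Theorem~\ref{W sep}; in the degenerate case you reduce via the Schmidt decomposition to the $G'$-action on $\mathbb G(2,4)$ and prove sphericality with the same Pl\"ucker test point $\mathcal K=\Span\{\ket{++},\ket{--}\}$. One small slip: changing the orthonormal basis $(\ket\phi,\ket\psi)$ of $\mathcal K$ corresponds to a local unitary on the \emph{first} qubit, not on subsystems $2$ and $3$, but this does not affect the argument.
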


\section{Summary}

In order to determine which pure states of three qubits are up to local
unitaries uniquely determined by the spectra of their reduced density
matrices, we have analyzed the change of the structure of the fiber
$\Psi^{-1}(\alpha)$ as $\alpha$ varies in the Kirwan polytope
$\Psi(\mathbb{P}(\mathcal{H}))$. We have shown that $M_\alpha =
\Psi^{-1}(\alpha)/K$ is generically a two-dimensional space. For the points
in the boundary of the Kirwan polytope, the situation is rather different. We
have showed that in this case $\Psi^{-1}(\alpha)/K$ is a single point, i.e.\
the dimension drops by two compared with the points inside the interior. We
have therefore identified all $K$-orbits that are uniquely determined by the
spectra of the reduced density matrices.
In addition, we have examined the Kirwan polytopes $\Psi(\overline{G.x_j})$
for all six three-qubit SLOCC classes (i.e.\ for the closures of the orbits of
the complexified group $G=K^{\mathbb{C}}$) and their mutual relation. In
particular, we have proved that states from the so-called $W$ SLOCC class are
up to local unitaries separated by $\Psi$, i.e.\ each $K$-orbit inside
$\overline{G.x_W}$ is characterized by the collection of spectra of the
one-qubit reduced density matrices. This statement generalizes in a
straightforward way to any number of qubits.

Interestingly, the drop of the dimension of $M_\alpha$ on the boundary of the
Kirwan polytope has the following counterpart in
\cite{ChristandlDoranKousidisWalter}: The probability density $f(\alpha)$ of
the eigenvalue distribution of the reduced density matrices of a randomly
chosen pure state of three qubits vanishes precisely on the boundary of the
Kirwan polytope. Since it is well-known that $f(\alpha) = \vol M_\alpha$ for
regular points of the momentum map \cite{duistermaatheckman82,meinrenken97},
it is reasonable to wonder whether an analogous statement could hold more
generally (the main result of \cite{meinrenkensjamaar99} suggests that this
might in fact be true).

We also noticed that the polytope $\Psi(X_W)$ associated with the $W$ SLOCC
class is of the same dimension as the polytope $\Psi(X_{\GHZ})$ corresponding
to the $\GHZ$ SLOCC class, whereas for bi-separable and separable states the
Kirwan polytope is of strictly lower dimension. On the other hand, it is
known that for three qubits only the $W$ and $\GHZ$ SLOCC classes represent
genuinely entangled states. This intriguing relationship suggests that by
looking at the polytopes corresponding to different SLOCC classes one can
decide to what extent states from this class are entangled. We believe that
this is not a coincidence and conjecture that similar phenomena should be
present for $N$-qubit systems, where $N > 3$.

The symplectic methods of the present paper can also be applied and extended to
other problems of quantum entanglement theory. In \cite{SOK12}, two of the authors have analyzed from
the to\-po\-lo\-gi\-cal perspective all SLOCC classes of pure states for both distinguishable
and indistinguishable particles using geometric invariant theory and momentum map geometry, resulting in a division of all SLOCC classes into physically meaningful groups of families. Based on the above results, they presented in \cite{SOK12a} a general algorithm, working for arbitrary systems of distinguishable and indistinguishable particles, for finding the above mentioned groups of families. The algorithm provides a certain, physically meaningful, parametrization of SLOCC classes by critical sets of the so-called total variance function of a state.
Independently, in \cite{WalterDoranGrossChristandl} one of the authors has in collaboration with others combined symplectic geometry and geometric invariant theory in a novel approach to the study of multiparticle entanglement, resulting in a systematic way of classifying multiparticle entanglement that can be witnessed efficiently and robustly in experiments.

\section*{Acknowledgments}
We gratefully acknowledge the support of SFB/TR12 Symmetries and Universality
in Mesoscopic Systems program of the Deutsche Forschungsgemeischaft, a grant
of the Polish National Science Center under the contract number
DEC-2011/01/M/ST2/00379, Polish MNiSW grant no.\ IP2011048471, the Swiss
National Science Foundation (grant PP00P2--128455), the German Science
Foundation (grant CH 843/2--1), and the National Center of Competence in
Research `Quantum Science and Technology'. The authors are especially in debt
to Alan Huckleberry for many valuable discussions concerning symplectic and
algebraic geometry and for his constant interest in their work. The second
author would also like to thank Matthias Christandl, Brent Doran, David
Gross, and Eckhard Meinrenken for many fruitful discussions.

\section*{References}

\end{document}